\newtheorem{theorem}{Theorem}[section]
\newtheorem{remark}{Remark}[theorem]
\newtheorem{definition}{Definition}[theorem]
\newcommand{\catname}[1]{\mathbf{#1}}
\DeclareMathOperator{\Tr}{Tr}
\DeclareMathOperator{\Image}{Im}
\begin{document}

\title{Functoriality of Quantum Resource Theory and Variable-Domain Modal Logic}

\author{Patrick T. Fraser}
\affiliation{Department of Physics, University of Toronto, 60 St. George Street, Toronto, Canada}
\orcid{0000-0003-1804-029X}
\email{p.fraser@mail.utoronto.ca}
\maketitle

\begin{abstract}
Quantum resource theory is a cutting-edge tool used to study practical implementations of quantum mechanical principles under realistic operational constraints. It does this by modelling quantum systems as restricted classes of possible or permissible experimental operations. Modal logic provides a formal tool for studying possibility and impossibility is a completely general logical setting. Here, I show that quantum resource theories may be functorially translated into models of variable-domain S4 modal logic in a way that provides a new class of formal techniques for exploring quantum resource-theoretic problems. I then extend this functorial relationship to an injective one by adding structure to these logical models to reflect the convertibility preorder of resources in the underlying resource theory. It is shown that the resulting logical models are in agreement with the symmetric monoidal category approach to resource theories. I conclude by discussing how this viewpoint may be deployed concretely using purely model-theoretic considerations, dispensing with Hilbert space structure altogether.
\end{abstract}

The operationalization of quantum theory around agent-based informational measures and operational constraints has proven to be one of the most significant conceptual advancements for its practical application. Indeed, operational quantum theory provides the backbone for most applications in quantum cryptography, computing, communication, and many other diverse settings (see, for instance,~\cite{nielsen:2010,wilde:2013,bruss:2019}).

Quantum resource theory essentially takes this lesson -- of formulating quantum theory in terms of constrained classes of physical operations -- and generalizes it to a unified formalism wherein this pragmatic view may be fully realized. This formalism is very powerful and has been successfully applied to in many diverse settings to study entanglement~\cite{vedral:1997}, non-Gaussianity~\cite{genoni:2010} coherence~\cite{baumgratz:2014}, computation~\cite{veitch:2014}, quantum thermodynamics~\cite{ng:2018}, and contextuality~\cite{amaral:2019}, among other things.

In any particular context, a quantum resource theory (QRT) follows a procedure of specifying the kinds of quantum systems under investigation and then specifying a selection of allowed or \textit{possible} operations which can be carried out on these quantum systems. Such a stipulation carries with it a dual notion of disallowed or \textit{impossible} operations (which correspond to resources). These notions of possibility and impossibility -- and modality more generally -- are the essential features of the theory. Importantly, these notions, though formulated in the language of channels between Hilbert spaces, do not strictly require the full Hilbert space structure of the theory; there is an apparent redundancy or over-specification in the theory's representation (for a modern, less redundant approach to quantum theory, see~\cite{coeke:2017,heunen:2020}).

In a completely separate domain of research (mathematical and philosophical logic) questions about the general formalization of possibility, necessity, and modality, have been studied in great detail using modal logic. On the philosophical side, modal logic has been used to provide formal insights into many philosophically interesting questions (some examples may be found in~\cite{Hintikka1962,Lewis:1973,Kripke2007,Sider2010,Burgess2012}). In mathematics literature, modal logic has been realized as being in close connection with intuitionistic logic~\cite{Godel1986} and has been explored using topos theory (e.g. ~\cite{awodey:2014,goldblatt:2006}) topological semantics (e.g.~\cite{kremer:2005,awodey:2008}), homotopy type theory (e.g.~\cite{hottbook:2013,corfield:2020}) and many other such frameworks~\cite{Goldblatt2003} proving it to be a rich source of mathematical structure.

Recently, modal logic has been successfully deployed in the study of a variety of foundational aspects of quantum theory (see, for instance,~\cite{NL2018,boge:2019}). I here seek to make a further contribution to this growing view that modal logic may be exploited to make amenable particular issues in quantum information and quantum foundations. I aim to do so in a manner that makes contact with the categorical approach to quantum mechanics and operational theories (e.g.~\cite{abramsky:2013,coeke:2016b,coeke:2017,heunen:2020}).

Specifically, I here show that the modality present in quantum resource theories (namely, in the scope of \textit{possible} interventions which characterize a quantum resource theory) is in fact enough to recover the majority of their mathematical structure. That is, I show that there is a manner of faithfully functorially interpreting quantum resource theories as models of variable-domain modal logic. By then considering such variable-domain models with an added preorder structure on their global domains (induced by the convertibility preorder on quantum states from quantum resource theory), I show that the given functor may be extended to be injective.

Thus, I establish a precise functorial relation between quantum resource theory and variable-domain modal logic. These results are then placed in the context of the modern discussion of categorical quantum theory -- the symmetric monoidal categorical (SMC) approach to resource theories in particular -- by showing that models in the image of the constructed injective functor admit a canonical construction of an SMC which, in the theoretical setting developed by~\cite{coeke:2016}, has the same features as the original QRT. Specifically, the channel structure is the same, and there is a natural identification of free states and resource states in the original QRT with those in the more general resulting SMC.

After this development, I then provide some preliminary results indicating just how this correspondence may be used to pose resource-theoretic questions in the language of modal logic.

Given the intellectual distance between quantum resource theory and modal logic, I begin with a short introduction to both in Sections~\ref{sec:qrt} and~\ref{sec:modal-logic}, respectively. I then discuss the functoriality of these frameworks and prove several results indicating that the conceptual advance provided here makes sense in Section~\ref{sec:functoriality}. This functorial relation is shown to agree with the interpretation of resource theories as symmetric monoidal categories in Section~\ref{sec:smc-models}. Finally, I discuss the interpretation and concrete application of the logical translation provided here in Section~\ref{sec:discussion} and further explore the possibility of extensions to richer logical settings and different intervention-based physical theories.

\section{Quantum Resource Theories}
\label{sec:qrt}
Quantum resource theory provides a formalism for studying quantum mechanical protocols under different kinds of operational constraints. Briefly, a Quantum Resource Theory (QRT) is a collection of permitted operations on specified quantum systems; states which may be generated under these permitted operations are called free states, and those which may not are called resources.

The typical example of a QRT is where two observers are situated in separate laboratories and they are only able to communicate via classical channels. Then no matter what local quantum mechanical operations they perform, they can never create a quantum state which jointly entangles their separate laboratories. Thus, entangled states are resources. However, if they share such an entangled state to begin with, then still using their local and classical operations, they can carry out teleportation protocols which would otherwise be impossible, hence why such a resource is `resourceful.'

In this basic setting, there is a relevant collection of Hilbert spaces $\textbf{H}=\{\mathbb{C},\mathcal{H}^A,\mathcal{H}^B, \mathcal{H}^A\otimes\mathcal{H}^B\}$ where $\mathcal{H}^A$ and $\mathcal{H}^B$ are the lab Hilbert spaces of the two observers ($\mathbb{C}$ is thr `trivial' space; it's role shall be made clear shortly). There is also a class of permissible operations (i.e. quantum channels) $\mathcal{O}$ which consists of local operations and classical communication (LOCC). Essentially, any channel may be permitted on $\mathcal{H}^A$ or $\mathcal{H}^B$, but only classical communication operations are permitted between the two; thus any operation on $\mathcal{H}^A\otimes\mathcal{H}^B$ must be separable, among other things.

It turns out that the LOCC resource theory is rather sophisticated (see~\cite{chitambar:2014}), however, the above illustration exhibits the basic formula for describing a quantum resource theory; first, one specifies the Hilbert spaces of the systems under consideration (this step is usually left implicit, and often taken to be \textit{all} Hilbert spaces), and then they describe the class of permitted operations on these systems in terms of quantum channels. Once this has been done, there is a class of free states which may be generated using only those permitted operations. Then all other states on the relevant systems are deemed resources (entangled states in the above example).

QRTs operationalize quantum theory by posing questions about quantum systems purely in the pragmatic language of interventions which may be carried out in the laboratory. Indeed, while the above heuristic referred to quantum states, the Choi-Jamio{\l}kowski isomorphism ensures the existence of a full channel-state duality~\cite{jamiolkowsi:1972,choi:1975,jiang:2013}. Thus, it is enough to speak entirely in the language of channels (channels which `prepare' certain states are given as channels from $\mathbb{C}$ to the relevant Hilbert space).

The usual Hilbert space formalism of quantum resource theories~\cite{chitambar:2019} characterizes particular QRTs in terms of \textit{(i)} the available channels by which hypothetical agents are taken to be allowed to intervene on quantum systems, and \textit{(ii)} the collection of states which may be freely prepared and manipulated via those channels. Clearly, \textit{(ii)} is a byproduct of \textit{(i)}; free states are \textit{determined} by free operations. That said, I shall explicitly identify the collection of free states of a given QRT in its definition, even though this is not a primitive defining notion. I shall additionally take the collection of Hilbert spaces on which the relevant quantum systems are defined to be a constitutive feature of QRTs. I proceed with a few definitions.

\begin{definition}
Let $\mathcal{H}$ be a Hilbert space. A \textit{state} $\rho\in B(\mathcal{H})$ is any positive semi-definite, self-adjoint operator with $\Tr\rho=1$.
\end{definition}

The collection of all states on $\mathcal{H}$ is denoted $\mathcal{S}(\mathcal{H})$.

\begin{definition}
A \textit{quantum channel} between systems $A$ and $B$ defined on Hilbert spaces $\mathcal{H}^A$ and $\mathcal{H}^B$ respectively is a completely positive trace preserving (CPTP) map $\Phi:B(\mathcal{H}^A)\to B(\mathcal{H}^B)$.
\end{definition}

A map $\Phi$ is positive if it takes positive operators to positive operators. It is completely positive if for every $k\geq 1$, the induced map $\tilde{\Phi}:M_{k\times k}(B(\mathcal{H^A}))\to M_{k\times k}(B(\mathcal{H^B}))$ which takes $T_{ij}\mapsto \Phi(T_{ij})$ is positive.

\begin{definition}
Given a collection $\mathcal{O}$ of quantum channels over a particular class of Hilbert spaces $\textbf{H}$, the induced \textit{free states} $\mathcal{F}$ are the states $\rho\in \mathcal{S}(\mathcal{H})$ for any $\mathcal{H}\in \textbf{H}$ such that there exists some $\Phi\in\mathcal{O}$ with $\Phi:B(\mathbb{C})\to B(\mathcal{H})$ and $\rho\in \text{Im}(\Phi)$.
\end{definition}

The shorthand notation $\mathcal{O}(A\to B)$ shall be adopted to denote channels of the form $\Phi:B(\mathcal{H}^A)\to B(\mathcal{H}^B)$. Likewise, $\mathcal{O}(A)$ shall denote $\mathcal{O}(A\to A)$ and $1_A$ shall denote the identity channel in $\mathcal{O}(A)$.

We see that $\mathcal{F}$ is determined by $\mathcal{O}$. In the construction of a QRT, one may begin by specifying the states which may be `prepared' (which are obviously free states); such a state $\rho\in\mathcal{S}(\mathcal{H})$ may be assigned a channel $\Phi_\rho\in\mathcal{O}(\mathbb{C}\to\mathcal{H})$ defined by $\Phi_\rho(z)=z\rho$ where $z\in B(\mathbb{C})$ (noting that $B(\mathbb{C})=\mathbb{C}$, whence $\Tr(z)=z$). Since the only choice of $z$ which is a state is $z=1$, we see that $\Phi_\rho$ uniquely chooses $\rho$ out of $\mathcal{S}(\mathcal{H})$ as the only state in its image. It is in this way that $\mathcal{O}(\mathbb{C}\to\mathcal{H})$ naturally defines state preparations and initial free states. Given the rest of free operations of the theory (i.e. the rest of $\mathcal{O}$), the closure of $\mathcal{O}$ under composition shall ensure that any other state which can be obtained by applying free operations to such preparable states will also have a corresponding unique channel in $\mathcal{O}(\mathbb{C}\to\mathcal{H})$. In this way, $\mathcal{F}$ is not so much a free parameter of the theory which may be varied but is a result of the choice of $\mathcal{O}$.

It is known that resource theories may be broadly understood as symmetric monoidal categories~\cite{coeke:2016}. The definition of states as channels out of $\mathbb{C}$ corresponds to the notion that states are processes whose inputs are the trivial object in such a category. When the processes are CPTP maps between complex Hilbert spaces, the trivial object is simply the Hilbert space $\mathbb{C}$ (see, for instance,~\cite[Example 3.2]{coeke:2016}), as illustrated in the previous paragraph.

\begin{definition}
A \textit{quantum resource theory} is a triple $\langle\textbf{H},\mathcal{O},\mathcal{F}\rangle$ of quantum channels and free states over a collection of specified Hilbert spaces $\textbf{H}=\{\mathcal{H}^\alpha\}$ such that, for every Hilbert space $\mathcal{H}^\alpha\in\textbf{H}$, $1_{\alpha}\in\mathcal{O}(\alpha)$ and if $\Phi\in\mathcal{O}(A\to B)$ and $\Psi\in\mathcal{O}(B\to C)$, then $\Psi\circ\Phi\in\mathcal{O}(A\to C)$.
\end{definition}

These conditions on $\mathcal{O}$ ensure that doing nothing is a permitted operation (given by the identity channel), and any two operations may be composed with one another. In a sense, we see already that QRTs must be reflexive (you can always `transform' a state to itself) and transitive (under channel composition). This will serve to be useful in the discussion of modal logic to come, and the reader familiar with the modal system S4 may already see the punchline. (Indeed, the conditions on resource states also reflect the notion of compositionality from category theory, a connection which is not explored here, but which is nevertheless interesting.)

The \textit{resource states} of a QRT are given by the states in $\mathcal{R}=\bigcup_{\alpha}\mathcal{S}(\mathcal{H}^\alpha)-\mathcal{F}$. That is, resources are those things that you cannot freely generate.

A point of clarification is in order. When QRTs are defined, $\textbf{H}$ is not usually considered an explicit feature of the theory. Indeed, in the usual presentation of quantum resource theory~\cite{chitambar:2019}, $\textbf{H}$ is taken to be the class of \textit{all} Hilbert spaces and $\mathcal{O}$ is then taken to contain $1_\mathcal{H}$ for all of these spaces. From that perspective, $\textbf{H}$ as it has been defined here may be viewed as the collection of spaces on which there are \textit{non-trivial} channels (i.e. channels beyond the just the identity, or channels into or out of other spaces). Specifying $\textbf{H}$ explicitly thus doesn't lose any of the structure of the theory, it merely makes it easier to then put in correspondence with modal logic models.

Quantum resource theories under this construal may be isomorphic to one another in the following sense:

\begin{definition}
Two quantum resource theories $\langle\textbf{H},\mathcal{O},\mathcal{F}\rangle$ and $\langle\textbf{H}',\mathcal{O}',\mathcal{F}'\rangle$ are isomorphic if there exists an isomorphism $\varphi_H:\textbf{H}\to\textbf{H}'$ where $\varphi_H(\mathcal{H})\cong\mathcal{H}$ (with $\varphi_{A\to A'}:\mathcal{H}^A\to\varphi_H(\mathcal{H}^A)$ denoting the induced Hilbert space isomorphism\footnote{Note that there is a natural inverse of this isomorphism given by $\varphi_{A'\to A}$.}) such that $\Phi\in\mathcal{O}(A\to B)$ if and only if $\varphi_{B\to B'}\circ\Phi\circ\varphi_{A'\to A}\in\mathcal{O}'$.
\end{definition}

This essentially means that two QRTs $X$ and $Y$ are isomorphic when their collections of Hilbert spaces may be paired up with each other in such a way that every Hilbert space in $X$ is paired with an isomorphic one in $Y$, and the quantum channels in $X$ can be pushed forward to the channels in $Y$. There is no explicit condition relating $\mathcal{F}$ and $\mathcal{F}'$ here, however, none are necessary, as $\mathcal{F}$ is derived entirely from $\mathcal{O}$.

For a pair of such isomorphic QRTs, $\mathcal{O}(A'\to B')$ shall denote $\mathcal{O}(\varphi_H(\mathcal{H}^A)\to\varphi_H(\mathcal{H}^B))$ where $\varphi_H$ is the bijection between their respective classes of Hilbert spaces $\textbf{H}$ and $\textbf{H}'$.

Given QRTs $X=\langle\textbf{H},\mathcal{O},\mathcal{F}\rangle$ and $Y=\langle\textbf{H}',\mathcal{O}',\mathcal{F}'\rangle$, we say that $Y$ is a \textit{sub-QRT} of $X$ and write $Y\leq X$ when $\textbf{H}'\subseteq\textbf{H}$ and $\mathcal{O}'=\mathcal{O}\restriction\textbf{H}'$ (and hence $\mathcal{F}'\subseteq\mathcal{F}\cap(\bigcup_{\mathcal{H}'\in\textbf{H}'}\mathcal{S}(\mathcal{H}'))$, as determined by the restricted class of channels).

I now define a category $\catname{QRT}$ of isomorphism classes of QRTs. To do this, we must determine which maps between QRTs shall constitute arrows in this category. If we let $\catname{CPTP}$ denote the category whose objects are Hilbert spaces and whose arrows are CPTP maps between these Hilbert spaces, then any particular QRT is a subcategory of $\catname{CPTP}$, and there is an inclusion functor $I$ which realizes this fact. Noting that $\textbf{H}$ need not include \textit{all} Hilbert spaces, QRTs need not be \textit{wide} subcategories of $\catname{CPTP}$. While the usual treatment of QRTs in which $\textbf{H}$ \textit{is} all Hilbert spaces, and then QRTs would be wide, the choice to specify $\textbf{H}$ as being only some Hilbert spaces allows for a more natural realization of modal logical models in what follows.

Noting that QRTs are subcategories in this sense, there is an obvious choice of arrows: given two QRTs $X$ and $Y$, we may take the arrows between $X$ and $Y$ to be any morphism $f:X\to Y$ for which the following diagram commutes:

\begin{center}
\begin{tikzcd}
  X \arrow[r, "f"] \arrow[dr, hook, "I"]
    & Y \arrow[d, hook, "I"] \\
    & \catname{CPTP} \end{tikzcd}    
\end{center}

Since $I$ is the obvious inclusion functor, we see that such arrows $f$ are inclusion maps which take sub-QRTs into their super-QRT. That is, if $X$ and $Y$ are QRTs, then we shall say that the arrows $f:X\to Y$ are all of the inclusion maps that map $X$ into an isomorphic sub-QRT of $Y$ (i.e. $X\cong f(X)$ and $f(X)\leq Y$). Thus, inclusions are the natural arrows for the category $\catname{QRT}$.

If each QRT is taken to be a subcategory of $\catname{CPTP}$, then these inclusion morphisms are the usual inclusion \textit{functors} between such subcategories. It is easy to see that the class of inclusions is closed under composition, and includes the identity map $1_X$ which identically maps every element of $X$ onto itself.

There is one final consideration to take note of before $\catname{QRT}$ may be fully defined, which pertains to size. If we were to take $\textbf{H}$ to include \textit{all} Hilbert spaces (as is done in~\cite{chitambar:2019} for instance), then QRTs would be wide subcategories of $\catname{CPTP}$, and thus they would be \textit{large} in the categorical sense (as the collection of all possible Hilbert spaces -- potentially including many copies of the same Hilbert space with different labels -- is too big to be a set). Indeed, many other none-wide QRTs could in principle have $\textbf{H}$ to be a proper class. However, in the translation into modal logic to follow, it shall turn out that the choice of $\textbf{H}$ will correspond to the choice of possible worlds in the resulting logical model. These classes of worlds are generally taken to be sets in usual logic literature, and thus it is convenient to only consider QRTs which are \textit{small} (i.e. \textbf{H} and $\mathcal{O}$ are sets). This detail about possible worlds forming a set shall be discussed further in Section~\ref{sec:modal-logic}.

With these considerations in place, $\catname{QRT}$ shall denote be the category whose objects are isomorphism classes of small QRTs and whose arrows are inclusion maps into sub-QRTs.

In the common quantum resource theory literature,often, one is not so interested in the particular sequence of operations necessary to manipulate resources in a particular way. Rather, they are concerned with understanding \textit{which} resources they may create \textit{provided} they already have some other resource. Ordering resources by their convertibility provides insight into how operationally `valuable' a particular resource is in a given context.

There is a natural preorder~\cite{chitambar:2019} on resource states which reflects exactly this fact. Given a QRT $\langle\textbf{H},\mathcal{O},\mathcal{F}\rangle$ with states $\mathcal{S}$, for any $\rho,\sigma\in\mathcal{S}$, we write $\rho\xrightarrow{\mathcal{O}}\sigma$ if there is some $\Phi\in\mathcal{O}$ with $\sigma=\Phi(\rho)$. Then $\xrightarrow{\mathcal{O}}$ is a preorder on $\mathcal{S}$ which is closed on the class of resource states $\mathcal{R}$ in the sense that if $\sigma$ is a resource state, so too is $\rho$. \cite{coeke:2016} claim that any manner of measuring resources is essentially the same as describing features of this preorder structure. To this end, we shall see in Theorem~\ref{thm:inj-star} that this preorder structure, if added to the modal logic framework described below, is sufficient to completely recover the full structure of any QRT up to isomorphism \textit{without reference to the Hilbert space formalism}.

I now introduce the basic features of variable-domain modal logic.

\section{Modal Logic}
\label{sec:modal-logic}
Modal logic provides a formal setting wherein philosophers and logicians alike can speak formally about possibility and necessity. Essentially, modal logic proceeds by carrying out the usual syntactic constructions for classical logic with an added `possible-worlds' structure and then introduction of additional `modal' operators. These possible worlds represent copies of the underlying classical logic which may be semantically distinct from each other. How these possible worlds are connected then provides a natural interpretation of modal terms using the modal operators. I should note that, while the term `possible worlds' may seem very mystical when first encountered, it merely refers to a particular sort of formal semantics.

Noting that any classical logical connective may be expressed in terms of any other with suitable use of brackets and negation $\neg$, I here suppose for simplicity that the only connective symbol in the logical language to be considered is the conditional $\to$. Likewise, I suppose the usual underlying classical propositional logic axioms and take \textit{modus ponens} to be the only classical rule of inference (there will be an added modal axiom and rule of inference as well).

Possible worlds are constructed in the following manner: one creates a set of `worlds' $\mathscr{W}$ and a binary `accessibility' relation $\mathscr{R}$ on $\mathscr{W}$, and adds two new symbols $\Box$ (the `necessity' operator) and $\Diamond:=\neg\Box\neg$ (the `possibility' operator) to the language. These new operators are interpreted such that $\Box\phi$ means `necessarily $\phi$' and $\Diamond\phi$ means `possibly $\phi$.' These operators are connected to the possible worlds as follows: $\Diamond\phi$ is true at a world $w\in\mathscr{W}$ (that is, $\phi$ is \textit{possible} in $w$) if there is a world $u$ which is accessible to $w$ wherein $\phi$ is true. Necessity of $\phi$ (the formula $\Box\phi$) is interpreted in a similar manner, but instead requiring that $\phi$ is true in \textit{all} worlds which are accessible to $w$.

Truth valuation then occurs at each world separately (though in the modal logic setting considered here, atomic symbols will have a global truth value\footnote{There are many different kinds of modal logic, some of which have world-dependent truth interpretations for atomic symbols, but we are concerned here only with variable-domain modal semantics which uses rigid designators.}). In this article, I am concerned with Variable-Domain Modal Logic (VDML), which may readily be extended to fully quantified predicate modal logic~\cite[pp. 308 -- 314]{Sider2010}. Loosely following~\cite{Sider2010}, a model of this system is defined as follows:

\begin{definition}
A VDML-model is a 5-tuple $\mathcal{M}=\langle\mathscr{W},\mathscr{R},\mathscr{D},\mathscr{Q},\mathscr{I}\rangle$ where:
\begin{itemize}
    \item $\mathscr{W}$ is a non-empty set (called possible worlds).
    \item $\mathscr{R}$ is a binary relation on $\mathscr{W}$ (an inter-word accessibility relation).
    \item $\mathscr{D}$ is a non-empty set (a global domain) of atomic propositional symbols.
    \item $\mathscr{Q}:\mathscr{W}\to\mathcal{P}(\mathscr{D})$ is a function that assigns a sub-domain $\mathscr{D}_w$ to every world $w\in\mathscr{W}$.
    \item $\mathscr{I}:\mathscr{D}\to\{0,1\}$ is a truth interpretation function on atomic symbols.
\end{itemize}
\end{definition}

The domain $\mathscr{D}_w$ of a world $w$ is essentially the restriction of the language of logical discourse available at that world. For formulas that do not contain modal operators, the syntax of this system is given by the usual one for classical proposition logic. When modal operators are present, there is one additional axiom schema for deductions, called the $K$ axiom:

\begin{equation}
    \vdash \Box(\phi\to\psi)\to(\Box\phi\to\Box\psi)
\end{equation}

for all formulas $\phi$ and $\psi$. Likewise, there is one additional rule of inference, called the necessitation rule:

\begin{equation}
    \phi\vdash\Box\phi
\end{equation}

Validity in this system, which is defined at a particular world, is then given by the following valuation function.

\begin{definition}
In a model $\mathcal{M}$, for any atomic symbol $\alpha\in\mathscr{D}$ and formulas $\phi$ and $\psi$ at any world $w\in\mathscr{W}$, the valuation function $V_{\mathcal{M}}$ is given inductively by
\begin{itemize}
    \item $V_{\mathcal{M}}(\alpha,w)=\mathscr{I}(\alpha)$.
    \item $V_{\mathcal{M}}(\neg\phi,w)=1$ iff $V_{\mathcal{M}}(\phi,w)=0$.
    \item $V_{\mathcal{M}}(\phi\to\psi,w)=1$ iff either $V_{\mathcal{M}}(\phi,w)=0$ or $V_{\mathcal{M}}(\psi,w)=1$.
    \item $V_{\mathcal{M}}(\Box\phi,w)=1$ iff for ever $v\in\mathscr{W}$, if $\langle w,v\rangle\in\mathscr{R}$, then $V_{\mathcal{M}}(\phi,v)=1$.
    \item $V_{\mathcal{M}}(\Diamond\phi,w)=1$ iff there exists some $v\in\mathscr{W}$ with $\langle w,v\rangle\in\mathscr{R}$ and $V_{\mathcal{M}}(\phi,v)=1$.
\end{itemize}
\end{definition}

If some formula $\phi$ is valid in a modal $\mathcal{M}$ at \textit{all} worlds of some model $\mathcal{M}$, we write $\vDash_\mathcal{M}\phi$. If $\phi$ is valid in all worlds of all models (for instance, tautologies of classical propositional logic), then we simply write $\vDash\phi$. If $\phi$ is valid in a model $\mathcal{M}$ whenever $\psi$ is valid, we may write $\psi\vDash_\mathcal{M}\phi$.

It should be noted that these models do not include quantification or predication. However, such features may readily be added at a layer of the semantics and syntax which is detached from the underlying structure which is needed for comparison with quantum resource theory. I thus do not define all of this here for the sake of brevity, but note that it is generically possible. VDML-models may be isomorphic in the following way:

\begin{definition}
Two VDML-models $\mathcal{M}=\langle\mathscr{W},\mathscr{R},\mathscr{D},\mathscr{Q},\mathscr{I}\rangle$ and $\mathcal{M}'=\langle\mathscr{W}',\mathscr{R}',\mathscr{D}',\mathscr{Q}',\mathscr{I}'\rangle$ are isomorphic ($\mathcal{M}\cong\mathcal{M}'$) if there exists a pair of bijections $\varphi_W:\mathscr{W}\to\mathscr{W}'$ and $\varphi_D:\mathscr{D}\to\mathscr{D}'$ with

\begin{itemize}
    \item $\langle w,u\rangle\in \mathscr{R}$ iff $\langle \varphi_W(w),\varphi_W(u)\rangle\in\mathscr{R}'$.
    \item $\varphi_D(\mathscr{D}_w)=\mathscr{D}'_{\varphi_W(w)}$
    \item $\mathscr{I}(\alpha)=\mathscr{I}'(\varphi_D(\alpha))$ for all $\alpha\in\mathscr{D}$
\end{itemize}
\end{definition}

This final condition ensures that the truth valuation is equivalent in isomorphic models. In the comparison with quantum resource theories, this will turn out to be necessary for encoding the free states of a QRT in a logical model in such a way that free state structure is preserved under isomorphism.

Given a pair of VDML-models $\mathcal{M}$ and $\mathcal{M}'$, we say that $\mathcal{M}'$ is a \textit{sub-model} of $\mathcal{M}$ and write $\mathcal{M}'\leq\mathcal{M}$ when $\mathscr{W}'\subseteq\mathscr{W}$, $\mathscr{R}'=\mathscr{R}\restriction(\mathscr{W}'\times\mathscr{W}')$, $\mathscr{D}'_w=\mathscr{D}_w$ for all $w\in\mathscr{W}'$, and $\mathscr{I}'(\alpha)=1$ for $\alpha\in\mathscr{D}'$ only if $\mathscr{I}(\alpha)=1$.\footnote{There is an asymmetry in the truth-valuation functions here; this condition is weaker than some might hold want -- some may wish this qualification be `if and only if' instead. However, it is necessary for the discussion to follow, as this will reflect the way in which free states of a QRT change when certain channels are `switched off'. Possible world structures may also be viewed as a particular sort of graph with vertices given by worlds and edges given by the accessibility relation. This asymmetry arises due to the vanishing of some edges in the restriction to a subgraph.} Given two VDML-models $\mathcal{M}$ and $\mathcal{M}'$, an inclusion is a map $g:\mathcal{M}\to\mathcal{M}'$ for which $g(\mathcal{M})\cong\mathcal{M}$ and $g(\mathcal{M})\leq\mathcal{M}'$.

A VDML-model is called a \textit{VDS4-model} if the accessibility relation $\mathscr{R}$ is both reflexive and transitive.\footnote{This alludes to the S4 normal modal system in propositional modal logic. The name `S4' is largely a historical artefact.} The collection of all isomorphism classes of VDS4-models form a category whose arrows are inclusions into larger models which I shall denote $\catname{VDS4}$.

It is worth noting that $\catname{VDS4}$ is a \textit{large} category. To see this, note that each VDS4-model is a Cartesian product of five sets (or functions on sets, which are themselves sets), two of which are arbitrary ($\mathscr{W}$ and $\mathscr{D}$). Thus, the objects of $\catname{VDS4}$ may be enumerated by the collection of all sets, whence $\catname{VDS4}$ and $\catname{Set}$ have the same size. In spite of this, the constitutive features of each VDS4-model are all sets. It is this fact which required we defined quantum resource theories to be only \textit{small} sub-categories of $\catname{CPTP}$ in the previous section.

I now derive a functorial relationship between the categories $\catname{QRT}$ and $\catname{VDS4}$.

\section{Functoriality}
\label{sec:functoriality}
Here, I show that the class of all quantum resource theories is related to the class of all VDS4-models by a functor $F:\catname{QRT}\to\catname{VDS4}$, and determine how close it is to being injective and surjective. I then show that the so-called convertibility preorder of states in a quantum resource theory provides enough additional structure to a VDS4-model to single it out uniquely (up to isomorphism). Thus, I construct an injective functor into the category of VDS4-models with this preorder structure added. I use this to show that the several intuitions about quantum resource theory translate cleanly into the modal logic framework.

Before doing this, I pause to make clear just why this procedure is natural. Essentially, the Hilbert spaces of a QRT correspond to particular quantum systems that agents may intervene upon. Thus, there is a sense in which they may be treated as individual, physically distinct `worlds' whose underlying logical variables are the system states. However, the channels which these quantum agents have access to may allow them to communicate, or otherwise induce interaction between different systems. In this way, there is a certain notion of inter-world accessibility. Thus, the possible-worlds semantics is a natural tool with which to model this phenomenon.

In general, distinct systems have logically distinct states, and so the `logical' language of studying each `world' ought to be distinct. However, given that some systems may be viewed as subsystems, it is natural to suppose that these `logical variables' are not completely separate, but may overlap. It is for this reason that a variable-domain approach is particularly natural. I now make this correspondence precise with a functor I shall denote by $F$.

Let $F$ be defined as follows. Given a QRT $\langle\textbf{H},\mathcal{O},\mathcal{F}\rangle$, let $\mathscr{W}_\textbf{H}=\textbf{H}$ and take $\mathscr{D}_\mathcal{H}=\mathcal{S}(\mathcal{H})$. Then take $\mathscr{R}_\mathcal{O}=\{\langle\mathcal{H}^A,\mathcal{H}^B\rangle|(\exists\Phi\in\mathcal{O}(A\to B))\}$. Take the truth valuation function to then be $\mathscr{I}_\mathcal{F}(\rho)=1$ iff $\rho\in\mathcal{F}$. Take $\mathscr{D}_\textbf{H}=\bigcup_{\mathcal{H}\in\textbf{H}}\mathscr{D}_\mathcal{H}$ and thus $\mathscr{Q}_\textbf{H}(\mathcal{H})=\mathscr{D}_\mathcal{H}$. Then define $F$ by 

\begin{equation}
    F(\langle\textbf{H},\mathcal{O},\mathcal{F}\rangle):=\langle\mathscr{W}_\textbf{H},\mathscr{R}_\mathcal{O},\mathscr{D}_\textbf{H},\mathscr{Q}_\textbf{H},\mathscr{I}_\mathcal{F}\rangle.
\end{equation}

It is easy to check that the image of $F$ is a \textit{VDML}-model since $\catname{QRT}$ contains only small QRTs, and thus $\textbf{H}$ and $\mathcal{O}$ are sets.

\begin{theorem}\label{thm:functoriality}
The map $F:\catname{QRT}\to\catname{VDS4}$ is a functor.
\end{theorem}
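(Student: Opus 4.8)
that $F$ sends objects to objects (already noted in the excerpt, but I must confirm the image lands in $\catname{VDS4}$, not merely $\catname{VDML}$), and that $F$ acts coherently on arrows, preserving identities and composition. Since the arrows in both categories are inclusions, the core of the argument is to show that $F$ carries a QRT inclusion $f:X\to Y$ (realizing $X$ as an isomorphic sub-QRT of $Y$) to a VDS4 inclusion $F(f):F(X)\to F(Y)$, i.e. that $F(X)$ is isomorphic to a sub-model of $F(Y)$.

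First I would establish the object part. Given a QRT $\langle\textbf{H},\mathcal{O},\mathcal{F}\rangle$, the reflexivity and transitivity axioms on $\mathcal{O}$ (the presence of each $1_\alpha$ and closure under composition) translate directly into reflexivity and transitivity of $\mathscr{R}_\mathcal{O}$: since $1_\alpha\in\mathcal{O}(\alpha)$ we have $\langle\mathcal{H}^\alpha,\mathcal{H}^\alpha\rangle\in\mathscr{R}_\mathcal{O}$, and if $\langle A,B\rangle,\langle B,C\rangle\in\mathscr{R}_\mathcal{O}$ then witnessing channels compose to a channel in $\mathcal{O}(A\to C)$, giving $\langle A,C\rangle\in\mathscr{R}_\mathcal{O}$. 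This is precisely the S4 condition, so $F(X)$ is a VDS4-model. This is the step the excerpt has been foreshadowing (``the reader familiar with S4 may already see the punchline''), and it is essentially immediate.

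Next I would handle the arrow part. The identity inclusion $1_X$ clearly maps to the identity inclusion on $F(X)$, so identities are preserved. For composition, given a sub-QRT inclusion I must exhibit the bijections $\varphi_W$ and $\varphi_D$ witnessing $F(X)\cong F(f(X))$ together with $F(f(X))\leq F(Y)$. Here $\varphi_W$ is induced by the Hilbert-space bijection $\varphi_H$ and $\varphi_D$ by the induced unitary conjugation $\rho\mapsto\varphi_{A\to A'}\rho\varphi_{A\to A'}^{-1}$ on states; I would check that these satisfy the three isomorphism conditions. The accessibility condition follows because $\Phi\in\mathcal{O}(A\to B)$ iff the pushed-forward channel lies in $\mathcal{O}'$; the domain condition follows because Hilbert-space isomorphisms induce bijections of state spaces $\mathscr{Q}$; and the interpretation condition $\mathscr{I}_\mathcal{F}=\mathscr{I}'_{\mathcal{F}'}\circ\varphi_D$ follows because $\mathcal{F}$ is determined by $\mathcal{O}$ and is preserved under the isomorphism. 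Since $F$ sends composites of inclusions to composites of the associated model inclusions, functoriality in the strict sense holds.

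The main obstacle I anticipate is the sub-model relation, specifically its deliberately asymmetric truth-valuation clause. In the definition of $\mathcal{M}'\leq\mathcal{M}$ one only requires $\mathscr{I}'(\alpha)=1\implies\mathscr{I}(\alpha)=1$, and the paper's footnote flags that this one-directional condition is what lets free states ``switch off'' when channels are removed. So the crux is to check that restricting $\mathcal{O}$ to the smaller class $\textbf{H}'$ can only shrink the free states, never enlarge them: a state free in the sub-QRT arises from some preparation channel $\Phi\in\mathcal{O}'\subseteq\mathcal{O}$, hence remains free in the larger theory, giving exactly the required implication $\mathscr{I}_{\mathcal{F}'}(\rho)=1\implies\mathscr{I}_\mathcal{F}(\rho)=1$ and not its converse. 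Confirming that this asymmetry aligns with the sub-model definition—rather than forcing the stronger biconditional—is the delicate point, and everything else is bookkeeping.
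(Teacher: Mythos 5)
Your proposal is correct and follows essentially the same route as the paper's proof: reflexivity and transitivity of $\mathscr{R}_\mathcal{O}$ from the identity and composition axioms on $\mathcal{O}$, preservation of identities, lifting the Hilbert-space bijection to an isomorphism of worlds and domains, and the observation that restricting to a sub-QRT can only shrink $\mathcal{F}$, which is exactly how the paper justifies the one-directional truth-valuation clause of the sub-model relation. You correctly identify the asymmetric interpretation condition as the only delicate point, which matches the paper's own emphasis.
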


\begin{proof}
Since $1_{\alpha}\in\mathcal{O}$ for all $\mathcal{H}^\alpha\in\textbf{H}$, and since $\mathcal{O}$ is transitive, we see that the constructed accessibility relation $\mathscr{R}$ is reflexive and transitive, whence $F(\langle\textbf{H},\mathcal{O},\mathcal{F}\rangle)$ is a VDS4-model (and not just a VDML-model). Thus, functoriality of $F$ amounts to showing \textit{(i)} that the image of $F$ on two isomorphic QRTs are isomorphic as VDS4-models (since objects are defined as isomorphism classes), and \textit{(ii)} that for any arrow $f\in\catname{QRT}$, $F$ induces an arrow $F\circ f\in\catname{VDS4}$.

For \textit{(i)}, let $X$ and $Y$ be QRTs and suppose $X=\langle \textbf{H},\mathcal{O},\mathcal{F}\rangle\cong\langle\textbf{H}',\mathcal{O}',\mathcal{F}'\rangle=Y$. If we denote

\begin{gather*}
    F(X)=\langle\mathscr{W}_\textbf{H},\mathscr{R}_\mathcal{O},\mathscr{D}_\textbf{H},\mathscr{Q}_\textbf{H},\mathscr{I}_\mathcal{F}\rangle\\ F(Y)=\langle\mathscr{W}_{\textbf{H}'},\mathscr{R}_{\mathcal{O}'},\mathscr{D}_{\textbf{H}'},\mathscr{Q}_{\textbf{H}'},\mathscr{I}_{\mathcal{F'}}\rangle
\end{gather*}

we see that QRT isomorphism of $X$ and $Y$ implies the existence of some bijection $\varphi_H$ which defines an isomorphism between $\mathscr{W}_{\textbf{H}}$ and $\mathscr{W}_{\textbf{H}'}$ with $\varphi_H(\mathcal{H})\cong\mathcal{H}$ such that $\Phi\in\mathcal{O}(A\to B)$ if and only if $\varphi_{B\to B' }\circ\Phi\circ\varphi_{A'\to A}\in\mathcal{O}'(A'\to B')$. Immediately, we note that $\varphi_H$ defines an isomorphism between the collections of worlds (Hilbert spaces) of the respective models related by $F$. Moreover, the Hilbert space isomorphisms $\varphi_{A\to A'}$ induced by $\varphi_H$ lift to isomorphisms between $B(\mathcal{H}^A)$ and $B(\varphi_H(\mathcal{H}^A))$, and hence between their sets of states. Then $\varphi_{A\to A'}$ is an isomorphism between domains; we have for every $\mathcal{H}^A\in\textbf{H}$,

\begin{equation}
\begin{split}
    \varphi_{A\to A'}(\mathscr{D}_{\mathcal{H}^A})=&\varphi_{A\to A'}(\mathcal{S}(\mathcal{H}^A))\\
    =&\mathcal{S}(\varphi_{H}(\mathcal{H}^A))\\
    =&\mathscr{D}'_{\varphi_{H}(\mathcal{H}^A)}
\end{split}
\end{equation}

so the respective sub-domains (and hence the respective super-domains) are equivalent in the appropriate sense.

Finally, if $\langle\mathcal{H}^A,\mathcal{H}^B\rangle\in\mathscr{R}_\mathcal{O}$, then there is some $\Phi\in\mathcal{O}(A\to B)$. But then $\varphi_{B\to B' }\circ\Phi\circ\varphi_{A'\to A}\in\mathcal{O}'(A'\to B')$, whence $\langle\varphi_H(\mathcal{H}^A),\varphi_H(\mathcal{H}^B)\rangle\in\mathscr{R}_{\mathcal{O}'}$. But then all three conditions are satisfied ensuring that the resulting VDS4-models are isomorphic. Whence, $F(X)\cong F(Y)$, so $F$ takes isomorphism classes of QRTs to isomorphism classes of VDS4-models.

I now show \textit{(ii)}. Clearly, for any QRT $X$, $F(1_X)=1_{F(X)}$ since 

\begin{equation*}
    F(1_X(X))=F(X)=1_{F(X)}F(X).
\end{equation*}

All remaining arrows are inclusions. Suppose that $X\leq Y$ and thus there is some inclusion arrow $f:X\to Y$ (recalling that objects are defined only up to isomorphism, thus we need not stipulate that $X$ is isomorphic to a sub-QRT). This is the only case when such an arrow exists. Then \textit{(ii)} amounts to showing that $F\circ f$ is an inclusion arrow in $\catname{VDS4}$, and that $F\circ f(X)\leq F(Y)$.

Since $\textbf{H}\subseteq\textbf{H}'$ and $\mathcal{O}=\mathcal{O}'\restriction\textbf{H}$, the induced worlds have $\mathscr{W}=\textbf{H}\subseteq\textbf{H}'=\mathscr{W}'$ and $\mathscr{R}=\mathscr{R}'\restriction(\mathscr{W}\times\mathscr{W})$. Likewise, $\mathscr{D}'_\mathcal{H}=\mathcal{H}=\mathscr{D}_{\mathcal{H}}$ for any $\mathcal{H}\in\textbf{H}$ so the domain condition for VDS4 sub-models is satisfied. Moreover, since $\mathcal{O}\subseteq\mathcal{O}'$, we see that the only change from $\mathcal{F}$ to $\mathcal{F}'$ in the inclusion of $X$ into $Y$ under $f$ is that some additional channels may get `turned on' when $X$ is included, whence $\mathcal{F}\subseteq\mathcal{F}'$ and so $\mathscr{I}_{\mathcal{F}}(\alpha)=1$ only if $\mathscr{I}_{\mathcal{F}'}(\alpha)=1$ for any $\alpha\in\mathscr{D}_{\textbf{H}}$. Thus, $F\circ f(X)\leq F(Y)$ and so $F\circ f$ is an inclusion map in $\catname{VDS4}$.

Thus, $F$ takes objects to object and arrows to arrows, and it trivially obeys compositionality of arrows, since the class of inclusions is closed under composition. Therefore, $F$ is a functor.
\end{proof}

I now determine certain features of this functor. I first show that $F$ is faithful and then determine how close it is to being injective and surjective.

\begin{theorem}\label{thm:faithful}
$F$ is faithful.
\end{theorem}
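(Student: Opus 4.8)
The plan is to prove faithfulness in the standard way, by showing that $F$ is injective on each hom-set: given parallel arrows $f,g\colon X\to Y$ in $\catname{QRT}$, I would show that $F(f)=F(g)$ forces $f=g$. The first step is to unwind what an arrow of $\catname{QRT}$ is. By the commuting-triangle definition, $f$ exhibits $X$ as an isomorphic sub-QRT $f(X)\leq Y$, and so is specified by the assignment $\mathcal{H}^A\mapsto f(\mathcal{H}^A)$ on Hilbert spaces together with the Hilbert-space isomorphisms $\varphi^f_{A\to A'}\colon\mathcal{H}^A\to f(\mathcal{H}^A)$ witnessing $X\cong f(X)$; the action on channels is then the induced conjugation $\Phi\mapsto\varphi^f_{B\to B'}\circ\Phi\circ\varphi^f_{A'\to A}$.

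The second step is to read off which of this data $F$ remembers. Since $F$ sends Hilbert spaces to worlds and states to domain elements, the inclusion $F(f)$ of VDS4-models carries a world-bijection equal to $f$'s action on Hilbert spaces and a domain-bijection equal to $f$'s action on states (the push-forward $\rho\mapsto\varphi^f_{A\to A'}\,\rho\,(\varphi^f_{A\to A'})^{\dagger}$). Consequently $F(f)=F(g)$ immediately yields that $f$ and $g$ agree on all objects and on all states.

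The main obstacle is to promote this to agreement on channels, since an arrow of $\catname{QRT}$ also acts on the morphisms of the subcategory. I would handle it with the elementary fact underlying channel--state duality: the density operators span $B(\mathcal{H}^A)$. Agreement on states gives $\varphi^f_{A\to A'}\,\rho\,(\varphi^f_{A\to A'})^{\dagger}=\varphi^g_{A\to A'}\,\rho\,(\varphi^g_{A\to A'})^{\dagger}$ for every state $\rho$, and by linearity this equality of conjugations extends to all of $B(\mathcal{H}^A)$; a short centralizer argument then forces $\varphi^f_{A\to A'}$ and $\varphi^g_{A\to A'}$ to agree up to a global phase. Since a global phase cancels in any conjugation, $f$ and $g$ induce the same map on channels, so $f=g$ and $F$ is faithful.

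Finally, I would record a shortcut as a consistency check. Every arrow satisfies $I_Y\circ f=I_X$ with $I_Y$ an inclusion functor, hence injective on both objects and morphisms; this already forces any two parallel arrows to coincide, so $\catname{QRT}$ is thin and every functor out of it is trivially faithful. I would nonetheless present the explicit argument above as the main proof, since it makes transparent exactly which features of a QRT morphism $F$ preserves, a point that will be reused when analysing injectivity and surjectivity of $F$.
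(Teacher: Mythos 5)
Your proposal is correct, but your main argument is not the one the paper uses; the paper's entire proof is what you relegate to the final ``consistency check.'' The paper observes that, because objects of $\catname{QRT}$ are isomorphism classes and the only non-identity arrows are inclusions onto isomorphic sub-QRTs, any two parallel arrows $f,f'\colon X\to Y$ satisfy $f(X)\cong X\cong f'(X)$ and are therefore already identified: each hom-set has at most one element, $\catname{QRT}$ is thin, and \emph{any} functor out of it is faithful. Note that this argument never actually inspects $F$. Your main argument instead verifies injectivity on hom-sets directly: you unwind an arrow into its world-level and state-level data, observe that $F$ retains both, and then recover the action on channels from the action on states via the spanning property of density operators and a centralizer argument showing the implementing isomorphisms agree up to a global phase. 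This is sound (the phase ambiguity cancels in conjugation, so the induced functors coincide), and it buys something the paper's one-liner does not: an explicit inventory of which features of a QRT morphism survive the passage through $F$, which is genuinely useful context for the subsequent injectivity and surjectivity analysis (Theorems \ref{thm:almost-inj} and \ref{thm:almost-surj}). The trade-off is that it leans on a more detailed reading of how $F$ acts on arrows than the paper ever writes down (the paper only shows $F\circ f$ is \emph{some} inclusion of VDS4-models, without specifying its world- and domain-bijections), whereas the paper's argument is immune to that underspecification precisely because it is a statement about $\catname{QRT}$ alone. Since the thinness observation is both necessary and sufficient here, I would lead with it and present your explicit computation as the informative supplement rather than the other way around.
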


\begin{proof}
Since the objects of $\catname{QRT}$ and $\catname{VDS4}$ are isomorphism classes, this result is straightforward. To show that $F$ is faithful, let $f$ and $f'$ be any two parallel arrows between the QRTs $X$ and $Y$ (i.e. $f,f'\in\hom_\catname{QRT}(X,Y)$). Then $f(X)\cong X\cong f'(X)$, and so in the isomorphism class of $X$, $f=f'$. That is, there is a \textit{unique} inclusion arrow between any two QRTs up to isomorphism. Thus, it is trivially the case that $F(f)=F(f')$ implies $f=f'$. Hence, $F$ is faithful.
\end{proof}

\begin{theorem}\label{thm:almost-inj}
Let $X=\langle\textbf{H},\mathcal{O},\mathcal{F}\rangle$ and $Y=\langle\textbf{H}',\mathcal{O}',\mathcal{F}'\rangle$ be two QRTs. Then $F(X)\cong F(Y)$ if and only if \textit{(i)} $\textbf{H}\cong\textbf{H}'$, \textit{(ii)} for each $\mathcal{H}^\alpha\in\textbf{H}$, $\mathcal{F}'=\varphi_{\alpha\to \alpha'}(\mathcal{F})$, and \textit{(iii)} for every $\Phi\in\mathcal{O}(A\to B)$, there exists a collection $\{\Psi_\alpha\}\subseteq\mathcal{O}'(A'\to B')$ such that $\Image(\Phi)=\bigcup_\alpha \Image(\varphi_{B'\to B}\circ\Psi_\alpha\circ\varphi_{A\to A'})$.
\end{theorem}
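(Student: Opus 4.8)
The plan is to prove the two implications separately, in each case exhibiting exactly the data the other side demands, and to build the candidate isomorphism from the Hilbert-space isomorphisms furnished by (i). For the ``if'' direction I would assume (i)--(iii) and construct a VDML-isomorphism $(\varphi_W,\varphi_D)$ between $F(X)$ and $F(Y)$ by setting $\varphi_W:=\varphi_H$ on worlds and letting $\varphi_D$ be the union $\bigcup_\alpha\varphi_{\alpha\to\alpha'}$ of the state-space bijections induced by the Hilbert-space isomorphisms of (i). These are the very maps shown in the proof of Theorem~\ref{thm:functoriality} to satisfy $\varphi_{\alpha\to\alpha'}(\mathcal{S}(\mathcal{H}^\alpha))=\mathcal{S}(\varphi_H(\mathcal{H}^\alpha))$, so the domain clause $\varphi_D(\mathscr{D}_w)=\mathscr{D}'_{\varphi_W(w)}$ is automatic. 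The truth-valuation clause $\mathscr{I}_\mathcal{F}(\rho)=\mathscr{I}_{\mathcal{F}'}(\varphi_D(\rho))$ is then literally condition (ii), while the accessibility clause reduces to the assertion that $\mathcal{O}(A\to B)$ is inhabited exactly when $\mathcal{O}'(A'\to B')$ is. A nonempty $\Phi\in\mathcal{O}(A\to B)$ forces, through (iii), a nonempty covering family $\{\Psi_\alpha\}\subseteq\mathcal{O}'(A'\to B')$; the converse inhabitation I would recover by reading the equality of (iii) through the inverse maps $\varphi_{A'\to A}$ and $\varphi_{B\to B'}$, treating (iii) as the symmetric statement that the two channel classes cover one another.

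For the ``only if'' direction I would begin with a witnessing VDML-isomorphism $(\varphi_W,\varphi_D)$ and read the three conditions back out. The world bijection $\varphi_W$ is already a bijection $\mathbf{H}\to\mathbf{H}'$, and the domain clause forces $\varphi_D$ to restrict to a bijection $\mathcal{S}(\mathcal{H})\to\mathcal{S}(\varphi_W(\mathcal{H}))$ for each world. Upgrading this set-level bijection to a genuine Hilbert-space isomorphism $\varphi_{\alpha\to\alpha'}$ --- so as to obtain condition (i) --- is itself a point requiring care, which I would handle by appealing to the fact that the convex geometry of a quantum state space determines its underlying Hilbert space up to isomorphism. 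Condition (ii) then comes from transporting the truth-valuation clause $\mathscr{I}_\mathcal{F}=\mathscr{I}_{\mathcal{F}'}\circ\varphi_D$ across this identification, and the existence half of condition (iii) follows from preservation of $\mathscr{R}$, which guarantees that whenever $\mathcal{O}(A\to B)$ is inhabited so is $\mathcal{O}'(A'\to B')$.

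The hard part will be recovering the \emph{full} strength of condition (iii) in the ``only if'' direction. By construction $F$ records only \emph{whether} a channel between two systems exists, not the images those channels sweep out, so at first sight $F(X)$ does not retain enough information to reconstruct the image-covering equality $\Image(\Phi)=\bigcup_\alpha\Image(\varphi_{B'\to B}\circ\Psi_\alpha\circ\varphi_{A\to A'})$. My route for closing this gap is to lean on the composition-closure of $\mathcal{O}$ together with the free-state characterization: the states reachable from $\mathbb{C}$ are exactly those marked free by $\mathscr{I}_\mathcal{F}$, so condition (ii) already secures the image-covering for every preparation channel $\mathbb{C}\to B$, and I would attempt to bootstrap from preparations to arbitrary channels $A\to B$ by precomposing with preparations and invoking closure under composition. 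I expect this bootstrapping --- reconstructing channel-image data from the coarser accessibility-and-valuation data retained by the VDS4-model --- to be the genuine crux of the argument, and the step where the mismatch in resolution between the QRT and its modal image must be managed most carefully.
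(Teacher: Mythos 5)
Your proposal does not close the argument at the step you yourself identify as the crux, and that step cannot be closed by the route you sketch. The modal image $F(X)$ retains exactly two pieces of channel-related data: which pairs of worlds are joined by \emph{some} channel (the relation $\mathscr{R}_\mathcal{O}$), and which states are free (the valuation $\mathscr{I}_\mathcal{F}$). Your bootstrapping idea --- precomposing an arbitrary $\Phi\in\mathcal{O}(A\to B)$ with preparations $\mathbb{C}\to A$ and invoking closure under composition --- only ever produces preparation channels $\mathbb{C}\to B$, and hence only constrains the restriction of $\Phi$ to the free states of $\mathcal{H}^A$. It says nothing about where $\Phi$ sends resource states, which are part of $\Image(\Phi)$, so the image-covering equality in \textit{(iii)} cannot be recovered this way. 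For what it is worth, the paper's own proof of the necessity of \textit{(iii)} is a one-line assertion (``if condition \textit{(iii)} fails\dots whence $F(X)\not\cong F(Y)$'') that does not engage with this loss of resolution either, and the counterexample paragraph immediately following the theorem (free-state-preserving automorphisms $\xi$ with $\Phi$ replaced by $\xi\circ\Phi$) shows the tension is real; you have correctly located the soft spot rather than missed the intended argument, but your proposal leaves the necessity of \textit{(iii)} unproven.

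Two further gaps you flag but do not repair. First, in the ``if'' direction the accessibility clause of a VDML-isomorphism is a biconditional, while \textit{(iii)} as stated only supplies covering families in one direction ($\mathcal{O}$ to $\mathcal{O}'$); ``reading the equality through the inverse maps'' does not extract the converse inhabitation from the statement as written, and the paper's proof likewise only argues one direction of the biconditional it announces. Second, in the ``only if'' direction a VDML-isomorphism gives $\varphi_D$ only as a \emph{set} bijection between domains, so your appeal to the convex geometry of state spaces determining the Hilbert space is unavailable: a bare bijection $\mathcal{S}(\mathcal{H})\to\mathcal{S}(\mathcal{H}')$ carries no affine structure, and all state spaces of Hilbert spaces of dimension at least two in a fixed cardinality regime are equinumerous. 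So the necessity of \textit{(i)} in the strong sense $\varphi_H(\mathcal{H})\cong\mathcal{H}$ is also not established by your argument (nor, strictly, by the paper's).
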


This shows that $F$ is almost injective, but the last condition says that some of the information about the QRT is lost under $F$, whence injectivity fails. I will show in Theorem~\ref{thm:inj-star} that this information is precisely the state convertibility preorder.

\begin{proof}
To begin, if $\textbf{H}\not\cong\textbf{H}'$, then the induced classes of worlds will have $\mathscr{W}_\textbf{H}\not\cong\mathscr{W}_{\textbf{H}'}$, whence $F(X)\not\cong F(Y)$. Likewise, if $\mathcal{F}'\neq\varphi_{\alpha\to \alpha'}(\mathcal{F})$, then the induced truth valuations $\mathscr{I}_\mathcal{F}$ and $\mathscr{I}_{\mathcal{F}'}$ will disagree, whence $F(X)\not\cong F(Y)$. Finally, if condition \textit{(iii)} fails, there will be some channel $\Phi\in\mathcal{O}(A\to B)$ for which $\Psi=\varphi_{B\to B'}\circ\Phi\circ\varphi_{A'\to A}\notin\mathcal{O}'(A'\to B')$, whence $F(X)\not\cong F(Y)$. Therefore conditions \textit{(i) -- (iii)} are necessary.

Conversely, if conditions \textit{(i)--(iii)} are met, then we see that condition \textit{(i)} ensures that $\mathscr{W}_\textbf{H}\cong\mathscr{W}_{\textbf{H}'}$, and likewise, that the respective variable domains are also isomorphic (under the induced maps $\varphi_{\mathcal{H}\to\mathcal{H}'}$), whence the VDML isomorphism conditions on $\mathscr{W}$, $\mathscr{D}$, and $\mathscr{Q}$ are satisfied. Condition \textit{(ii)} then ensures that $\mathscr{I}_F=\mathscr{I}_{F'}\circ\varphi_{\alpha'\to\alpha}$ and so the isomorphism condition on $\mathscr{I}$ will be satisfied. Thus, all that needs to be shown is that $\langle\mathcal{H}^A,\mathcal{H}^B\rangle\in\mathscr{R}_\mathcal{O}$ if and only if $\langle\varphi_H(\mathcal{H}^A),\varphi_H(\mathcal{H}^B)\rangle\in\mathscr{R}_{\mathcal{O}'}$. However, two Hilbert spaces $\mathcal{H}^A$ and $\mathcal{H}^B$ are related by $\mathscr{R}_\mathcal{O}$ if and only if there is some channel $\Phi\in\mathcal{O}(A\to B)$. But condition \textit{(iii)} ensures that, whenever there is such a channel, there is at least one channel (possibly many in the set $\{\Psi_\alpha\}$) in $\mathcal{O}(A'\to B')$. Thus the requisite condition on the relation $\mathscr{R}$ are satisfied as well, whence $F(X)\cong F(Y)$. Therefore conditions \textit{(i) -- (iii)} are sufficient.
\end{proof}

To see the kind of pathology which obstructs injectivity for $F$, consider the following form of counter-example. Let $\textbf{H}=\{\mathbb{C},\mathcal{H}^A,\mathcal{H}^B\}$. Then there is one QRT on this collection of Hilbert spaces, $X$, for which $\mathcal{O}(A\to B)$ only includes one channel $\Phi$ (with possibly other channels out of $\mathbb{C}$). There may, however, be another QRT $Y$ which is identical to $X$ in every way (including the channels from $\mathbb{C}$) except with $\Phi$ replaced with $\xi\circ\Phi$ where $\xi$ is some automorphism on $B(\mathcal{H}^A)$ such that $\mathcal{F}$ is invariant under $\xi$. Then one can readily check $F(X)\cong F(Y)$, even though $A\not\cong B$ (because $\varphi_{B\to B'}\circ\Phi\circ\varphi_{A'\to A}\notin\mathcal{O}(A'\to B')$), whence injectivity of $F$ is violated. However, $X$ and $Y$ \textit{still} satisfy property \textit{(iii)} from the above theorem. These free-state preserving automorphisms seem generically to be the only kind of behaviour which prevents injectivity.

I now determine the obstructions to the surjectivity of $F$. In the VDS4 setting, for a given world $w$, let $T(w):=\{p\in\mathscr{D}_w|\mathscr{I}(p)=1\}$ (i.e. the collection of atomic symbols in the domain of $w$ which are true under the interpretation $\mathscr{I}$). Then we have the following.

\begin{theorem}\label{thm:almost-surj}
Let $\mathcal{M}\in\textbf{VDS4}$. Then there exists some $X\in\textbf{QRT}$ such that $\mathcal{M}=F(X)$ only if \textit{(i)} for all $w,u\in\mathscr{W}$ with $\langle w,u\rangle\in\mathscr{R}$, either $T(u)\neq\emptyset$ or $T(u)=T(w)=\emptyset$, and \textit{(ii)} there is some world $c\in\mathscr{W}$ with $|\mathscr{D}_c|=1$ and $\mathscr{D}_c\cap\mathscr{D}_w=\emptyset$ for $w\neq c$ such that $\langle c,w\rangle\in\mathscr{W}$ for all $w$ with $T(w)\neq\emptyset$.
\end{theorem}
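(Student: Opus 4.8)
The plan is to prove the two necessary conditions by assuming $\mathcal{M} = F(X)$ for some QRT $X = \langle \textbf{H}, \mathcal{O}, \mathcal{F}\rangle$ and reading each condition off the explicit construction of $F$. Since this is an \emph{only if} statement I only need the forward (necessity) direction. Every clause of (i) and (ii) is phrased in terms of the accessibility relation, the emptiness of the sets $T(w)$, domain cardinalities, and domain disjointness, all of which are preserved by the VDS4 isomorphisms of the previous section; hence it suffices to verify the conditions for a literal representative $F(X)$ and let them transfer across the isomorphism class. Throughout I identify $\mathscr{W}$ with $\textbf{H}$ and $\mathscr{D}_\mathcal{H}$ with $\mathcal{S}(\mathcal{H})$, so that $T(\mathcal{H}) = \mathcal{S}(\mathcal{H}) \cap \mathcal{F}$ is exactly the set of free states carried by the world $\mathcal{H}$.

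For condition (i) the key observation is that free states propagate forward along channels. Reformulating the disjunction, (i) is equivalent to the implication: whenever $\langle w, u\rangle \in \mathscr{R}$ and $T(w) \neq \emptyset$, then $T(u) \neq \emptyset$. I would therefore fix a free state $\rho \in T(w)$, so that by the definition of $\mathcal{F}$ there is a preparation channel $\Psi \in \mathcal{O}(\mathbb{C} \to w)$ with $\rho \in \Image(\Psi)$. Since $\langle w, u\rangle \in \mathscr{R}_\mathcal{O}$ there is a witnessing channel $\Phi \in \mathcal{O}(w \to u)$, and closure of $\mathcal{O}$ under composition gives $\Phi \circ \Psi \in \mathcal{O}(\mathbb{C} \to u)$. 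Then $\Phi(\rho) \in \Image(\Phi \circ \Psi)$ is a genuine state of $u$ (CPTP maps send states to states), so $\Phi(\rho) \in \mathcal{F} \cap \mathcal{S}(u) = T(u)$, giving $T(u) \neq \emptyset$ as required.

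For condition (ii) I would take the witnessing world $c$ to be the trivial object $\mathbb{C} \in \textbf{H}$, which is present in every QRT precisely because it is the common source of all preparation channels used to define $\mathcal{F}$. Three things must then be checked. First, $\mathscr{D}_c = \mathcal{S}(\mathbb{C}) = \{1\}$ is a singleton, giving $|\mathscr{D}_c| = 1$. Second, the unique state $1 \in B(\mathbb{C})$ is a scalar and hence is not a density operator on any higher-dimensional space, so $\mathscr{D}_c \cap \mathscr{D}_w = \emptyset$ for every $w \neq c$. Third, for accessibility: if $T(w) \neq \emptyset$ then $w$ carries a free state $\sigma$, and by the definition of $\mathcal{F}$ there is some $\Theta \in \mathcal{O}(\mathbb{C} \to w)$ with $\sigma \in \Image(\Theta)$, which immediately puts $\langle c, w\rangle \in \mathscr{R}_\mathcal{O}$. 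Thus $c = \mathbb{C}$ satisfies all the clauses of (ii).

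The main obstacle I anticipate is the disjointness clause in (ii) together with securing the distinguished status of $c$. The argument that the scalar $1$ is not a state of any $w$ of dimension at least two is immediate, but it relies on $\mathbb{C}$ being the \emph{unique} one-dimensional space in $\textbf{H}$; were $\textbf{H}$ to contain a second labelled copy of the trivial space, its singleton domain could collide with $\mathscr{D}_c$ and spoil disjointness. I would address this by appealing to the standard convention, implicit in the construction of $F$ and in the role of $\mathbb{C}$ as the monoidal unit, that the trivial object occurs exactly once, so that $c$ is unambiguous. The only remaining edge case is $\mathcal{F} = \emptyset$: then the accessibility clause of (ii) is vacuous, and one still needs $\mathbb{C} \in \textbf{H}$ to furnish $c$; this again follows from treating $\mathbb{C}$ as a constitutive feature of every QRT rather than an artefact of a nonempty free-state set.
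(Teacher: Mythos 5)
Your proposal is correct and follows essentially the same route as the paper: condition \textit{(i)} comes from the fact that free states propagate forward along channels (you prove the contrapositive of the paper's contradiction argument directly, and in fact spell out the composition $\Phi\circ\Psi\in\mathcal{O}(\mathbb{C}\to u)$ that the paper leaves implicit), and condition \textit{(ii)} is witnessed by taking $c=\mathbb{C}$ exactly as the paper does. Your explicit handling of the $\mathcal{F}=\emptyset$ and duplicate-trivial-space edge cases is more careful than the paper's, which simply assumes a distinguished copy of $\mathbb{C}$ in $\textbf{H}$, but this does not change the substance of the argument.
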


It is here that we see how quantum resource-theoretic considerations limit modality. Condition \textit{(i)} may be understood to mean that truth needs to be monotonically increasing with respect to the accessibility relation $\mathscr{R}$. This reflects the fact that, while resource states may become free, free states can never become resources. Condition \textit{(ii)} essentially reflects that any VDS4-model corresponding to a QRT must have a trivial world which plays the role of the trivial space $\mathbb{C}$.

\begin{proof}
To show that \textit{(i)} is necessary, suppose that $T(u)=\emptyset$ and that $T(w)\neq\emptyset$. That is, $u$ is accessible to $w$, and the domain of $u$ has no atomic symbols which are true under the interpretation function $\mathscr{I}$ of $\mathcal{M}$, while $w$ \textit{does} have true atomic symbols in its domain.

Suppose that $\mathcal{M}=F(X)$ for some QRT $X=\langle \textbf{H},\mathcal{O},\mathcal{F}\rangle$, and let $\mathcal{H}_w$ and $\mathcal{H}_u$ be the Hilbert spaces which are interpreted as worlds $w$ and $u$, respectively, under $F$. The domain $\mathscr{D}_u$ has true symbols under $\mathscr{I}_{\mathcal{F}}$ if and only if it is the case that $\mathcal{S}(\mathcal{H}_u)\cap\mathcal{F}\neq\emptyset$ (i.e. there are free states on $\mathcal{H}_u$). Since $T(u)=\emptyset$, it is thus the case that $\mathcal{F}\cap \mathcal{S}(\mathcal{H}_u)=\emptyset$ and so $\mathcal{S}(\mathcal{H}_u)\subseteq\mathcal{R}$ while $T(w)\neq\emptyset$ implies $\mathcal{F}\cap \mathcal{S}(\mathcal{H}_w)\neq\emptyset$.

Under the assumption that $\langle w,u\rangle\in\mathscr{R}_\mathcal{O}$, there must be some channel $\Phi\in\mathcal{O}(w\to u)$. However, since $\Image\Phi\subseteq\mathcal{S}(\mathcal{H}_u)\subseteq\mathcal{R}$, and $\mathcal{F}\cap\mathcal{S}(\mathcal{H}_w)\neq\emptyset$ there is some free state on $\mathcal{H}_w$ which $\Phi$ takes to a resource state. This is a contradiction. Thus, condition \textit{(i)} is necessary.

To see that \textit{(ii)} is necessary, suppose $X=\langle\textbf{H},\mathcal{O},\mathcal{F}\rangle$ is some QRT. Then if $X$ has any free states, there must be a copy of $\mathbb{C}\in\textbf{H}$ such that, for any free state $\rho\in\mathcal{F}$ on any Hilbert space $\mathcal{H}\in\textbf{H}$, there is a channel $\Phi\in\mathcal{O}(\mathbb{C}\to\mathcal{H})$ with $\rho\in\Image\Phi$. Viewed as a world in $F(X)$, this distinct Hilbert space $\mathbb{C}$ has $|\mathscr{D}_\mathbb{C}|=|\mathcal{S}(\mathbb{C})|=1$ because there is only a single state on $\mathbb{C}$ (the identity operator). Additionally, taken in this way to be a separate space, we see that $\mathscr{D}_\mathbb{C}\cap\mathscr{D}_\mathcal{H}=\mathcal{S}(\mathbb{C})\cap\mathcal{S}(\mathcal{H})=\emptyset$ for all other $\mathcal{H}\in\textbf{H}$. Then any such Hilbert space with free states on it, viewed as a world in the VDS4-model $F(X)$, must be accessible to $\mathbb{C}$ under $\mathscr{R}_\mathcal{O}$. But a Hilbert space $\mathcal{H}\in\textbf{H}$ has free states if and only if $T(\mathcal{H})\neq\emptyset$ in the VDS4-model $F(X)$. Thus, viewed as a world, $\mathbb{C}$ is such a world $c$ in the resulting model $F(X)$ for \textit{any} QRT $X$ with free states. Whence, if a VDS4-model $\mathcal{M}=F(X)$ for some QRT $X$, it must have such a world $c$.
\end{proof}

\begin{remark}
Condition \textit{(ii)} in Theorem~\ref{thm:almost-surj} is not very strict, as such a world $c$ with a single-element domain may be appended to any VDS4-model, and the accessibility of that model may then be extended such to include $\langle c,w\rangle$ for each $w$ with $T(w)\neq\emptyset$.
\end{remark}

For consistency, it is worth stipulating that the unique atomic symbol in the domain of $c$ takes on a truth value of $1$ under $\mathscr{I}$ in models corresponding to the image of a QRT. While not strictly necessary at this time, this will ensure that quantum channels, viewed in the VDS4 setting, are truth preserving. Moreover, it makes sense intuitively to require this; $c$ is essentially a trivial world (just as $\mathbb{C}$ in the QRT setting is a trivial space). Thus, requiring that the symbol in $\mathscr{D}_c$ is true under $\mathscr{I}$ corresponds to supposing that any quantum agent capable of preparing quantum states is first capable of preparing the trivial state from which all other states arise. As such, if we let $p_c$ be the unique element of $\mathscr{D}_c$, we may stipulate that all VDS4-models in the image of $F$ (and thus have such a world $c$ to begin with) satisfy the following axiom:

\begin{equation}\label{eq:c-world-axiom}
    \vdash p_c.
\end{equation}

I now provide some preliminary heuristics for how this sort of modal language may be deployed for studying QRTs, after which I shall elaborate the formalism further to ensure injectivity using the convertibility preorder.

\begin{theorem}\label{thm:free-to-free}
For any QRT $X$ with VDS4-model $F(X)$, if $\rho\in\mathcal{S}(\mathcal{H}^A)$ and $\Phi\in\mathcal{O}(A\to B)$ for some $\mathcal{H}^B$, then $\vDash_{F(X)}\rho\to\Diamond\Phi(\rho)$.
\end{theorem}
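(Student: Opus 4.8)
The plan is to unfold the definition of $\vDash_{F(X)}$ and reduce the claim to showing that at an \emph{arbitrary} world $w\in\mathscr{W}_\textbf{H}$ we have $V_{F(X)}(\rho\to\Diamond\Phi(\rho),w)=1$. Since the valuation of a conditional is $1$ unless its antecedent is $1$ and its consequent is $0$, and since atomic symbols carry the \emph{global} truth value $V_{F(X)}(\alpha,w)=\mathscr{I}_\mathcal{F}(\alpha)$ (independent of $w$), I would split on whether $\rho\in\mathcal{F}$. If $\rho\notin\mathcal{F}$, then $V_{F(X)}(\rho,w)=\mathscr{I}_\mathcal{F}(\rho)=0$ at every world, so the conditional is vacuously valid everywhere and this case is dispatched at once.

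The substance lies in the case $\rho\in\mathcal{F}$, where I must establish $V_{F(X)}(\Diamond\Phi(\rho),w)=1$ at every $w$. The key observation is that the consequent names a state that is \emph{itself free}. Indeed, $\rho\in\mathcal{F}$ furnishes by the free-state definition a preparation $\Psi\in\mathcal{O}(\mathbb{C}\to A)$ with $\rho\in\Image(\Psi)$; the closure of $\mathcal{O}$ under composition in the definition of a QRT then gives $\Phi\circ\Psi\in\mathcal{O}(\mathbb{C}\to B)$, and since $\rho=\Psi(z)$ for some $z\in B(\mathbb{C})$ we get $\Phi(\rho)=(\Phi\circ\Psi)(z)\in\Image(\Phi\circ\Psi)$, whence $\Phi(\rho)\in\mathcal{F}$. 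Consequently $V_{F(X)}(\Phi(\rho),v)=\mathscr{I}_\mathcal{F}(\Phi(\rho))=1$ at \emph{every} world $v$. Because $F(X)$ is a VDS4-model, $\mathscr{R}_\mathcal{O}$ is reflexive, so $\langle w,w\rangle\in\mathscr{R}_\mathcal{O}$ already exhibits an accessible world at which $\Phi(\rho)$ is true; by the $\Diamond$ clause of the valuation this yields $V_{F(X)}(\Diamond\Phi(\rho),w)=1$. As $w$ was arbitrary, validity across all worlds follows.

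I expect the main obstacle to be getting the semantics exactly right rather than any computation. Because atomic symbols are rigid designators valued globally, the truth of $\Phi(\rho)$ cannot depend on \emph{which} accessible world is chosen, so the argument for $\Diamond$ cannot rest on the particular accessibility edge $\langle\mathcal{H}^A,\mathcal{H}^B\rangle$ induced by $\Phi$; it must instead invoke reflexivity of $\mathscr{R}_\mathcal{O}$ to witness the possibility uniformly at \emph{every} world, which is what makes the statement hold at all of $\mathscr{W}_\textbf{H}$ and not merely at the world $A$. The genuinely resource-theoretic content is the single line $\Phi(\rho)\in\mathcal{F}$, where composition closure of the free operations does all the work; everything else is bookkeeping with the valuation clauses. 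I would also note that the hypothesis $\Phi\in\mathcal{O}(A\to B)$ enters only to propagate freeness from $\rho$ to $\Phi(\rho)$, and that this argument is consonant with the monotonicity-of-truth condition \textit{(i)} of Theorem~\ref{thm:almost-surj} and with the trivial-world axiom~\eqref{eq:c-world-axiom}, though neither is strictly required here.
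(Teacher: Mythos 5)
Your proof is correct, and its resource-theoretic core is the same as the paper's: the one substantive step is that $\rho\in\mathcal{F}$ together with closure of $\mathcal{O}$ under composition forces $\Phi(\rho)\in\mathcal{F}$ (via $\Phi\circ\Psi\in\mathcal{O}(\mathbb{C}\to B)$), after which everything is valuation bookkeeping. Where you diverge is in how the $\Diamond$ is witnessed. The paper witnesses $\Diamond\Phi(\rho)$ using the specific accessibility edge $\langle\mathcal{H}^A,\mathcal{H}^B\rangle\in\mathscr{R}_\mathcal{O}$ induced by $\Phi$, checks $V(\rho\to\Diamond\Phi(\rho),\mathcal{H}^A)=1$ at the single world $\mathcal{H}^A$, and then asserts $\vDash_{F(X)}$; strictly, validity at all worlds does not follow from that edge alone. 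You instead observe that, because atomic symbols are rigidly valued by $\mathscr{I}_\mathcal{F}$, the truth of $\Phi(\rho)$ is global once $\Phi(\rho)\in\mathcal{F}$, and reflexivity of $\mathscr{R}_\mathcal{O}$ then supplies an accessible witness at \emph{every} world $w$; combined with your explicit vacuous case for $\rho\notin\mathcal{F}$, this delivers $V(\rho\to\Diamond\Phi(\rho),w)=1$ uniformly in $w$ and hence $\vDash_{F(X)}$ as defined. So your route is marginally more elementary on the modal side (it never needs the $A\to B$ edge to evaluate $\Diamond$, only to propagate freeness) and it closes the small gap in the paper's final step from world-local truth to model-wide validity.
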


\begin{proof}
Whenever $\rho\in\mathcal{F}$, there is some $\Psi\in\mathcal{O}(\mathbb{C}\to A)$ with $\rho\in \Image(\Psi)$. But then if $\Phi:\mathcal{O}(A\to B)$, $\Phi(\rho)\in\mathcal{F}$ as well by transitivity of $\mathcal{O}$. Hence, either $\rho\notin\mathcal{F}$, or both $\rho,\Phi(\rho)\in\mathcal{F}$. The existence of $\Phi$ as a member of $\mathcal{O}$ implies that $\langle\mathcal{H}^A,\mathcal{H}^B\rangle\in\mathscr{R}_{\mathcal{O}}$. From these facts, in the VDS4 model $F(X)$, $V(\rho,\mathcal{H}^A)=1$ only if $V(\Phi(\rho),\mathcal{H}^B)=1$. Then using accessibility, $V(\rho,\mathcal{H}^A)=1$ only if $V(\Diamond\Phi(\rho),\mathcal{H}^A)=1$. Thus $V(\rho\to\Diamond\Phi(\rho),\mathcal{H}^A)=1$, and so $\vDash_{F(X)}\rho\to\Diamond\Phi(\rho)$ for any $\rho$.
\end{proof}

This reflects the modality inherent in the notion that free states are those which are possible to freely generate.

While the functor $F$ clearly allows for a new perspective on quantum resource theory in terms of modal logic, its non-injectivity and non-surjectivity limit its value for clarifying important issues. However, if $\catname{QRT}$ and $\catname{VDS4}$ are appropriately extended with minimal additional structure, $F$ may then be extended to be injective too. Specifically, while the convertibility preorder of a QRT does not appear in its VDS4-model under $F$, if it \textit{is} specified, it makes the functorial correspondence between QRTs and VDS4-models injective. I demonstrate this now.

Suppose a QRT $X=\langle \textbf{H},\mathcal{O},\mathcal{F}\rangle$ is given and that the preorder $\xrightarrow{\mathcal{O}}$ is specified (in principle it may be read off of $X$ directly from $\mathcal{O}$, however, this information is ignored by $F$, so I here mean that we suppose this data is `stored' elsewhere). Then $\xrightarrow{\mathcal{O}}$ is a preorder on $\bigcup_{\alpha}\mathcal{S}(\mathcal{H}^\alpha)=\mathscr{D}_\textbf{H}$ (for $\textbf{H}=\{\mathcal{H}^\alpha\}$), the global domain of $F(X)$ as a VDS4-model.

We may define a pair of new categories, $\catname{QTR}^*$ and $\catname{VDS4}^*$ whose objects are isomorphism classes of QRTs with their convertibility preorders on the states of each Hilbert space in $\textbf{H}$, and VDS4-models together with a preorder on the global domain $\mathscr{D}$, respectively. That is, objects of $\catname{QRT}^*$ are isomorphism classes of pairs $\langle X,\xrightarrow{\mathcal{O}}\rangle$ for QRTs $X$ and objects of $\catname{VDS4}^*$ are isomorphism classes of pairs $\langle\mathcal{M},\preceq\rangle$ for VDS4-models $\mathcal{M}$ with $\preceq$ a preorder on the global domain of $\mathcal{M}$.

Isomorphisms in these categories are just isomorphisms on the unstarred categories with the additional requirement that the preorders are order-isomorphic. Inclusion maps may be defined in the usual manner as before with the additional requirement that the preorder is preserved under inclusion. If we extend $F$ to a new functor $F^*:\catname{QTR}^*\to\catname{VDS4}^*$ which respects this preorder in the right way, then $F^*$ ends up being properly injective.

Given a QRT $X$ with a preorder $\xrightarrow{\mathcal{O}}$, we may define $F^*$ by

$$F^*(\langle X, \xrightarrow{\mathcal{O}}\rangle)=\langle F(X),\preceq\rangle$$

where $\preceq$ is defined as $\rho\preceq\sigma$ (where $\rho$ and $\sigma$ are viewed as symbols in the domain $\mathscr{D}_\textbf{H}$ of $F(X)$) if and only if $\rho\xrightarrow{\mathcal{O}}\sigma$, viewed as states in the QRT $X$. The functoriality of $F$ from Theorem~\ref{thm:functoriality} is enough to show that this is a functor as well.

We have the following theorem.

\begin{theorem}\label{thm:inj-star}
$F^*$ is an injective functor.
\end{theorem}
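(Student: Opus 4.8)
The plan is to prove injectivity by leaning on Theorem~\ref{thm:almost-inj} and then using the convertibility preorder to close the single remaining gap — condition \textit{(iii)} — which is precisely what obstructs injectivity of the unstarred functor $F$. Faithfulness of $F^*$ comes essentially for free: exactly as in the proof of Theorem~\ref{thm:faithful}, between any two objects of $\catname{QRT}^*$ there is at most one inclusion arrow up to isomorphism (the added preorder-preservation requirement can only further constrain arrows, never create new parallel ones), so each hom-set has at most one element and $F^*$ is trivially faithful. The substance of the theorem is therefore injectivity on objects.

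First I would unpack the hypothesis. Suppose $F^*(\langle X,\xrightarrow{\mathcal{O}}\rangle)\cong F^*(\langle Y,\xrightarrow{\mathcal{O}'}\rangle)$. By definition of the starred category this is an isomorphism $F(X)\cong F(Y)$ of VDS4-models, witnessed by world and domain bijections $\varphi_H,\varphi_D$, together with the additional requirement that the preorders correspond: $\rho\xrightarrow{\mathcal{O}}\sigma$ if and only if $\varphi_D(\rho)\xrightarrow{\mathcal{O}'}\varphi_D(\sigma)$. Applying Theorem~\ref{thm:almost-inj} to the underlying VDS4-isomorphism immediately yields \textit{(i)} $\textbf{H}\cong\textbf{H}'$ and \textit{(ii)} $\mathcal{F}'=\varphi_{\alpha\to\alpha'}(\mathcal{F})$, which already supply the world, sub-domain, and truth-interpretation parts of a genuine QRT-isomorphism. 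All that remains is to upgrade the weak image-covering condition \textit{(iii)} to the full channel correspondence $\Phi\in\mathcal{O}(A\to B)\iff\varphi_{B\to B'}\circ\Phi\circ\varphi_{A'\to A}\in\mathcal{O}'(A'\to B')$ that QRT-isomorphism demands.

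This upgrade is where the preorder does its work, and it is the main obstacle. Fix $\Phi\in\mathcal{O}(A\to B)$ and write $\Phi':=\varphi_{B\to B'}\circ\Phi\circ\varphi_{A'\to A}$; a direct computation gives $\Phi'(\varphi_D(\rho))=\varphi_D(\Phi(\rho))$ for every $\rho\in\mathcal{S}(\mathcal{H}^A)$. Since $\rho\xrightarrow{\mathcal{O}}\Phi(\rho)$ holds trivially, preorder-correspondence forces $\varphi_D(\rho)\xrightarrow{\mathcal{O}'}\Phi'(\varphi_D(\rho))$ for all such $\rho$; that is, the entire graph of $\Phi'$ on states lies inside the reachability relation $\bigcup_{\Psi\in\mathcal{O}'(A'\to B')}\mathrm{graph}(\Psi)$. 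The key step is then a rigidity lemma: because each $\Psi\in\mathcal{O}'(A'\to B')$ acts as an affine map on the convex set $\mathcal{S}(\mathcal{H}^{A'})$, and $\Phi'$ is likewise affine with its graph contained in that union, the agreement sets $E_\Psi:=\{\mu:\Phi'(\mu)=\Psi(\mu)\}$ are relatively closed and cover $\mathcal{S}(\mathcal{H}^{A'})$; a Baire-category argument forces some $E_\Psi$ to have nonempty interior in the affine hull, and the identity theorem for affine maps then forces $\Phi'=\Psi$ on all of $\mathcal{S}(\mathcal{H}^{A'})$, whence $\Phi'\in\mathcal{O}'(A'\to B')$. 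The symmetric argument gives the reverse inclusion, establishing the full channel correspondence, and hence a QRT-isomorphism $X\cong Y$; since the preorders correspond by hypothesis, this is simultaneously an isomorphism in $\catname{QRT}^*$.

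I expect the rigidity lemma to be the delicate point, precisely because it is the mechanism that kills the free-state-preserving automorphisms identified after Theorem~\ref{thm:almost-inj} as the generic obstruction to injectivity: replacing $\Phi$ by $\xi\circ\Phi$ perturbs $\Phi'(\mu)$ away from every channel graph unless $\xi$ fixes the relevant states, which the preorder now detects. The affine covering argument is clean when $\mathcal{O}'(A'\to B')$ is countable, since Baire category applies directly to the complete metric space $\mathcal{S}(\mathcal{H}^{A'})$; for larger channel classes one must argue more carefully that an affine selection trapped inside a union of affine graphs cannot interpolate between distinct branches, which is exactly the content of the author's remark that such automorphisms are \emph{generically} the only pathology. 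Modulo this point, injectivity on objects together with the already-established faithfulness shows that $F^*$ is an injective functor.
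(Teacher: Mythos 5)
Your route is genuinely different from the paper's, and the difference matters. You argue the direct implication (isomorphic starred images $\Rightarrow$ isomorphic starred QRTs) and try to recover the \emph{full} channel correspondence $\Phi\in\mathcal{O}(A\to B)\iff\Phi'\in\mathcal{O}'(A'\to B')$ from the preorder, via a rigidity lemma about the \emph{graph} of $\Phi'$ being trapped in a union of channel graphs. The paper instead argues the contrapositive and works only with \emph{images}: assuming $X\not\cong Y$, it locates a state $\sigma\in\Image(\Phi)$ lying in the image of no $\Psi\in\mathcal{O}'(A'\to B')$, so that $\rho\xrightarrow{\mathcal{O}}\sigma$ holds while $\varphi_{A\to A'}(\rho)\xrightarrow{\mathcal{O}'}\varphi_{B\to B'}(\sigma)$ fails, breaking order-isomorphism of the preorders. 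The paper never tries to show that the transported channel $\Phi'$ itself belongs to $\mathcal{O}'$. (Your faithfulness remark is fine and is exactly the argument of Theorem~\ref{thm:faithful}.)

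The genuine gap is the rigidity lemma, and it is worse than your closing caveat suggests: the lemma is false, not merely delicate, whenever $\mathcal{O}'(A'\to B')$ is uncountable --- which is the generic case for QRTs (LOCC, all unitary conjugations, and so on). Concretely, suppose $\mathcal{O}'(A'\to B')$ contains the replacement channels $R_\sigma:\mu\mapsto\sigma$ for every $\sigma\in\Image(\Phi')$; each $R_\sigma$ is CPTP and affine, the agreement sets $E_{R_\sigma}=\Phi'^{-1}(\sigma)$ are closed and cover $\mathcal{S}(\mathcal{H}^{A'})$, and $\mathrm{graph}(\Phi')\subseteq\bigcup_\sigma\mathrm{graph}(R_\sigma)$, yet $\Phi'$ equals none of the $R_\sigma$ unless it is itself constant. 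Baire category is unavailable for an uncountable cover, and no sharper affine-interpolation argument can rescue the statement because its conclusion is simply untrue in this example. So your argument only yields the channel correspondence under a countability hypothesis on the hom-sets of $\mathcal{O}'$ that the theorem does not grant. To close the proof you would need to retreat to what the preorder actually determines --- the reachability relation on pairs of states, which is how the paper's proof proceeds --- rather than the membership of individual channels in $\mathcal{O}'$.
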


\begin{proof}
Let $X=\langle\textbf{H},\mathcal{O},\mathcal{F}\rangle$ and $Y=\langle\textbf{H}',\mathcal{O}',\mathcal{F}'\rangle$. Suppose $\langle X,\xrightarrow{\mathcal{O}}\rangle\not\cong \langle Y,\xrightarrow{\mathcal{O}'}\rangle$, it suffices to show that $F^*(\langle X,\xrightarrow{\mathcal{O}}\rangle)\not\cong F^*(\langle Y,\xrightarrow{\mathcal{O}'}\rangle)$. If $X\cong Y$, then $\xrightarrow{\mathcal{O}}$ and $\xrightarrow{\mathcal{O}'}$ are clearly order-isomorphic, whence the images under $F^*$ are isomorphic. If $X\not\cong Y$, then it has previously been established in Theorem~\ref{thm:almost-inj} that either $F(X)\not\cong F(Y)$, or else there is some $\Phi\in\mathcal{O}(A\to B)$ for which there does not a collection $\{\Psi_\alpha\}\subseteq\mathcal{O}'(A'\to B')$ satisfying $\Image(\Phi)=\bigcup_\alpha \Image(\varphi_{B'\to B}\circ\Psi_\alpha\circ\varphi_{A\to A'})$.

If $F(X)\not\cong F(Y)$, then $F^*(\langle X,\xrightarrow{\mathcal{O}}\rangle)\not\cong F^*(\langle Y,\xrightarrow{\mathcal{O}'}\rangle)$, and we are done. Thus, suppose $F(X)\cong F(Y)$. The nonexistence of channels just described then impacts the preorders in the following way. Given the channel $\Phi\in\mathcal{O}$, there is some $\sigma=\Phi(\rho)$ for some $\rho$ such that $\sigma\notin \Image(\varphi_{B'\to B}\circ\Psi\circ\varphi_{A\to A'})$ for any $\Psi\in\mathcal{O}'(A'\to B')$. Thus, $\rho\xrightarrow{\mathcal{O}}\sigma$, but $\varphi_{A\to A'}(\rho)\not\xrightarrow{\mathcal{O}'}\varphi_{B\to B'}(\sigma)$, whence $\xrightarrow{\mathcal{O}}$ and $\xrightarrow{\mathcal{O}'}$ are not order-isomorphic. Thus, the preorders on VDS4-model domains induced under $F^*$ will fail to be order-isomorphic. Whence, $F^*(\langle X,\xrightarrow{\mathcal{O}}\rangle)\not\cong F^*(\langle Y,\xrightarrow{\mathcal{O}'}\rangle)$ as needed. Therefore, $F^*$ is injective.
\end{proof}

One final note which is worth making explicit is that, while it is common to consider only finite-dimensional QRTs, there were no dimensionality assumptions made here; indeed, there wasn't even a cardinality assumption made with respect to the Hilbert space dimension, whence separability is not assumed. This is important because the channel-state duality exploited for the definition of QRTs is generalizable to infinite-dimensional settings (see, for instance,~\cite{holevo:2011}), and there have been approaches to quantum mechanics in more model-theoretically or categorically exotic settings (e.g.~\cite{benci:2019} or~\cite{coeke:2016b}, respectively) so one should strive to provide a description of QRTs which is capable of capturing these generalizations.

QRTs over a finite collection of finite-dimensional systems have corresponding VDS4 models with finite global domains (the $\mathscr{D}_\textbf{H}$'s). If $\textbf{H}=\{\mathcal{H}^\alpha\}$ consists of countably many separable Hilbert spaces, then the global domain will obey $|\mathscr{D}|\leq 2^{\aleph_0}$ with equality when there is at least one Hilbert space of dimension greater than one (whence the collection of states on that space becomes uncountable as it is closed under convex combinations of its pure states). Higher cardinality domains may be needed in non-separable cases.

I now show that the characterization of small QRTs in terms of VDS4-models described in this section is in agreement with the more general theory of resources understood as symmetric monoidal categories.

\section{SMCs for Logical Models}
\label{sec:smc-models}
A more general theory of resources (i.e. beyond just \textit{quantum} resource theories) than that discussed so far is given by~\cite{coeke:2016}. Rather than relying on Hilbert space constructions, they take resource theories to be arbitrary symmetric monoidal categories (SMCs). Following~\cite{awodey:2010}, we have the following definition.

\begin{definition}
A strict symmetric monoidal category is a category $\catname{C}$ with a distinguished unit object 1 that is equipped with a binary operation $\otimes$ which is functorial and associative such that

\begin{gather*}
    x\otimes(y\otimes z)=(x\otimes y)\otimes z\\
    x\otimes 1=x=1\otimes x.
\end{gather*}
\end{definition}

Such a strict monoidal category is \textit{symmetric} if there is also an isomorphism such that $x\otimes y\cong y\otimes x$ for all objects $x$ and $y$. The characterization of resource theories in terms of SMCs given by~\cite{coeke:2016} is more general than the strict case, however, what is to follow here will only require strict SMC, and Mac Lane's strictification theorem~\cite[p. 257]{maclane:1978} ensures that whatever can be proven in the strict case holds also in the non-strict case. 

This section aims to show that the VDS4$^*$-models in the image of $F^*$ described in the previous section are still resource theories in this sense as well. This amounts to showing that there is a canonically constructed SMC associated with each VDS4$^*$-model which corresponds to some QRT under $F^*$ and then to show that the free and resource objects of this SMC agrees with the free and resource states of the underlying QRT.

To each object $\mathcal{M}^*=\langle\mathcal{M},\preceq\rangle$ of the category $\catname{VDS4}^*$, we may define a new category $\catname{M}^*$ whose objects are given by the global domain $\mathscr{D}$ of $\mathcal{M}$. We may then define arrows in $\catname{M}^*$ in the following way. Let the map $f:x\to y$ be defined by $x\mapsto y$ for $x,y\in\mathscr{D}$. Then $f\in\catname{M}^*$ if and only if $x\preceq y$ in $\mathcal{M}^*$ and there exists a pair of worlds $w$ and $u$ such that $x\in\mathscr{D}_w$ and $y\in\mathscr{D}_u$ while $\langle w,u\rangle\in\mathscr{R}$. Since $\mathscr{R}$ is reflexive and transitive for VDS4-models, as is $\preceq$, it follows that there is always an identity map $1:x\to x$ with $1(x)=x$, and that the collection of all such maps is closed under composition. I note that, under this definition, for any two objects $x$ and $y$, $\hom(x,y)$ contains at most a single arrow.

Not all objects in $\catname{VDS4}^*$ correspond to quantum resource theories under $F^*$. Indeed, only those which satisfy conditions \textit{(i)} and \textit{(ii)} of Theorem~\ref{thm:almost-surj} can represent QRTs. Thus, in the discussion to follow, I shall only be concerned with those VDS4-models which satisfy these conditions (the condition on the existence of a distinguished world $c$ in particular). It is these models that shall be shown to have corresponding SMCs which may be interpreted as resource theories. With this in mind, if the VDS4-model $\mathcal{M}$ upon which $\catname{M}^*$ is constructed satisfies the conditions of Theorem~\ref{thm:almost-surj}, I shall denote its corresponding category by $\catname{M}^*_c$.

Given such a category $\catname{M}^*_c$, we may extend it to a new category $\widetilde{\catname{M}}^*_c$ whose objects are arbitrary \textit{finite} conjunctions of the objects in $\catname{M}^*_c$ (which are atomic logical symbols). On $\widetilde{\catname{M}}^*_c$, we may then define a monoidal binary operation $\otimes$ by $x\otimes y\mapsto x\land y$. It is easy to show that that

\begin{equation}
\begin{split}
    x\otimes(y\otimes z)&=x\land(y\land z)=x\land y\land z\\
    =&(x\land y)\land z=(x\otimes y)\otimes z
\end{split}
\end{equation}

where equality is understood to mean syntactic equivalence. Thus the associativity condition is met. Likewise, $x\land y=y\land x$, so symmetry holds (in fact, only isomorphism is necessary, but we have proper equality). The arrows in $\widetilde{\catname{M}}^*_c$ are given by

\begin{equation*}
\begin{split}
    (f_1\otimes\dots\otimes f_n)&(x_1\otimes\dots\otimes x_n)\\
    =&f_1(x_1)\otimes\dots\otimes f_n(x_n)
\end{split}
\end{equation*}

for all $n\geq 1$ and all $x_i$ and $f_i$ in $\catname{M}^*_c$. Composition of arrows in $\widetilde{\catname{M}}^*_c$ is given by

\begin{gather*}
    (f_1\otimes\dots\otimes f_n)\circ(g_1\otimes\dots\otimes g_n)=(f_1\circ g_1\otimes\dots\otimes f_n\circ g_n).
\end{gather*}

All that remains is to equip $\widetilde{\catname{M}}^*_c$ with a unit object. We may take the unique element of $\mathscr{D}_c$ (previously called $p_c$) to be this unit object, which shall be denoted $1$ (where it will be clear from context whether or not $1$ denotes the unity object or an identity map). In treating the object $1$ as a unit, we must now show that $x\otimes 1=x=1\otimes x$ for all $x\in|\widetilde{\catname{M}}^*_c|$. However, this is straightforward, since equality corresponds to syntactic equivalence. Recalling that all models in the image of $F^*$ satisfy the logical axiom~\eqref{eq:c-world-axiom} (which here may be written as $\vdash 1$), we may readily conclude the following facts for any $x\in|\widetilde{\catname{M}}^*_c|$:

\begin{gather*}
x= x,\qquad x\otimes x= x\\
x\otimes 1= x,\qquad x= x\otimes 1.
\end{gather*}

The first two facts show that objects in $|\widetilde{\catname{M}}^*_c|$ behave in a manner analogous to projections in operator theory.\footnote{A different interpretation of this same procedure would be to view $\widetilde{\catname{M}}^*_c$ as a category which looks like a diagonal matrix algebra over symbols in $\mathscr{D}$ (thus taking $\otimes$ to correspond to $\oplus$, the matrix direct sum). In so doing, equality then resembles a sort of `projective equivalence' in the matrix algebra. Thus, the construction given here resembles a logical version of the Grothendiek construction in $K$-theory for $C^*$-algebras, see, for instance,~\cite{rordam:2000}.} The last two facts ensure that $x\otimes 1=x= 1\otimes x$ and so $(\widetilde{\catname{M}}^*_c,\otimes,\circ,1)$ is a strict symmetric monoidal category with respect to the operation $\otimes$. Therefore, as per~\cite{coeke:2016}, $(\widetilde{\catname{M}}^*_c,\otimes,\circ,1)$ is a resource theory where the objects $|\widetilde{\catname{M}}^*_c|$ are states (either resources or free), morphisms are transformations between states, $\circ$ defines the composition of transformations, and $\otimes$ corresponds to the parallel description of states and transformations.

This is not exciting, as the construction was somewhat concocted. What \textit{is} interesting is that if the VDS4$^*$-model $\mathcal{M}^*$ upon which this SMC was constructed corresponds to some QRT $X$ under $F^*$ (i.e. $F^*(X)=\mathcal{M}^*$), then the resulting SMC is equivalent to $X$. Following the way in which morphisms were constructed, we see that the channel data from $X$ is encoded directly into the morphisms of $\widetilde{\catname{M}}^*_c$ via the convertibility preorder. Moreover, following~\cite[Definition 2.2]{coeke:2016}, the free objects of $(\widetilde{\catname{M}}^*_c,\otimes,\circ,1)$ are those objects $x$ for which $\hom(1,x)\neq\emptyset$. But from the definition of 1 (i.e. the world $c$ in Theorem~\ref{thm:almost-surj}), we see that these are exactly those elements of $\mathscr{D}$ for which $\mathscr{I}(x)=1$ (recalling that all such elements lie in worlds which are accessible to $c$). Thus, if $\mathcal{M}^*=F^*(X)$ for some QRT $X$, then the free objects of $(\widetilde{\catname{M}}^*_c,\otimes,\circ,1)$ are exactly the images of the free states of $X$ under $F^*$. In this way, the interpretation of quantum resource theories as logical models also respects the more general characterization of resource theories in terms of SMCs.

\section{Discussion}
\label{sec:discussion}
The construction provided here connects quantum resource theories to variable-domain modal logic in such a way that quantum states are understood as atomic symbols, truth values are understood to refer to the experimental ability to produce those quantum states, and quantum channels are interpreted as modes of accessibility between different quantum systems. I now illustrate why this modal logical representation is valuable for gaining a practical understanding of quantum resource theories.

One immediate justification for the naturalness of this presentation of QRTs is that, provided the VDS4-models carry a pre-ordering on their global domains (i.e. they live in $\catname{VDS4}^*$), injectivity of $F^*$ ensures that \textit{all} of the QRT structure is preserved. Thus, this representation is, in a sense, lossless with respect to its encoding of the qualitative features of the QRTs in question. In this way, there isn't a downside to using the modal logic representation. However, there are several obvious benefits.

First, the modal logic framework doesn't rely on the full Hilbert space and operator-theoretic structure of the usual QRT formalism. Thus, it is mathematically must simpler. One could view the image of $F^*$ in the category $\catname{VDS4}^*$ as a mathematical \textit{reduction} of quantum resource theory, carrying forward only the necessary features of the theory, and employing a language which is more amenable to talking about the sort of modal intervention foundations of the theory.

Another valuable feature of this framework is that the language of modal logic provides a new way to state features of QRTs. As an example of this, let us consider so-called \textit{resource destroying} QRTs (e.g.~\cite{liu:2017}). These are QRTs for which the collection of channels $\mathcal{O}$ may include a channel that takes resource states to free states (the converse, of course, is prohibited by the definition of a resource). In the context of resource \textit{preserving} QRTs, for any such theory $X$ it is the case that if $\rho\not\in\mathcal{F}$, then $\Phi(\rho)\not\in\mathcal{F}$ too. Thus, in the modal logic setting, it is true that $\vDash_{F(X)}\neg\rho\to\neg\Diamond\Phi(\rho)$. Applying contraposition, one obtains $\vDash_{F(X)}\Diamond\Phi(\rho)\to\rho$. This is the converse of theorem~\ref{thm:free-to-free}. Therefore, theorem~\ref{thm:free-to-free} may be extended to a bi-conditional exactly if $X$ is resource-preserving. Resource destroying QRTs, therefore, are those theories for which theorem~\ref{thm:free-to-free} cannot be inverted. Thus, the class of resource-destroying QRTs corresponds exactly to those VDS4-models in the image of $F$ which fail to be models of the extended logical theory which takes the bi-conditional form of~\ref{thm:free-to-free} as an axiom.

There is another useful property which is possessed by the variable domain semantics provided: it may be readily extended to one with full quantification and predication. Indeed, variable domains were constructed specifically to allow modal propositional logic to be extended to a full first-order theory~\cite{garson:2001}. Essentially, to extend a VDML-model to one which is capable of handling quantification and predicates, one needs to add semantical technology for substituting elements of the domain at a world for bound and free variables in quantified expressions. Once this has been done, one may begin carrying out deductions and proving the validity of predicated and quantified formulas. I do not expand the details of this procedure here. However, I shall now discuss how it may be exploited to provide a model-theoretic interpretation of classes of QRTs.

Let us consider \textit{convex} QRTs. These QRTs satisfy the condition that, given any two free states $\rho$ and $\sigma$ on some shared Hilbert space, the convex sum $p\rho+(1-p)\sigma$ is also a free state for all $0\leq p\leq 1$. However, convexity may be cast in terms of (quantified) predicate formulas in the VDS4 setting. Let $\{C_p\}$ be a class of two-place predicates indexed by $p\in[0,1]$ such that $C_p[\rho_1,\rho_2]$ is true exactly when either \textit{(i)} $\rho_1,\rho_2\in\mathcal{F}$ and $p\rho_1+(1-p)\rho_2\in\mathcal{F}$ or \textit{(ii)} one of $\rho_1$ or $\rho_2$ is not in $\mathcal{F}$. Then to show that a particular theory $X$ is convex is equivalent in the modal logic setting to proving 

\begin{equation}\label{eq:convex-pred}
    \vDash_{F(X)}(\forall\rho_1)(\forall\rho_2)(C_p[\rho_1,\rho_2])
\end{equation}

for each value of $p$. If one knows that a particular QRT is convex, then they know that Equation~\eqref{eq:convex-pred} holds for it, whence these formulas may be used in proofs of other QRT results in the modal logical setting.

Another fact to take note of is that the class of all convex QRTs is a sub-collection of the class of VDS4-models which takes $(\forall\rho_1)(\forall\rho_2)(C_p[\rho_1,\rho_2])$ as an axiom schema (for all $p\in[0,1]$). Namely, it is part of the sub-collection which satisfies the conditions of Theorem~\ref{thm:almost-surj}. By then studying the axiomatic extension of VDS4 by these axioms, one may demonstrate features of all such VDS4 models and thus \textit{a fortiori} all convex QRTs. The same sort of analysis may be applied to \textit{affine} QRTs and other classes of QRTs as well.

Yet another facet of quantum resource theory which may be clarified by the modal representation provided here lies in its relation to quantum control. In theoretical idealizations, the introduction of channels to a particular QRT is often trivialized. However, to practically implement a channel, there are usually \textit{necessary} auxiliary control systems that must be employed (for instance, in the quantum thermodynamics setting, see~\cite{woods:2019}). These sorts of necessary control requirements could be used to reduce the class of experimentally feasible QRTs (by discarding all QRTs which are overly-idealized and cannot be practically implemented). If these sorts of control requirements were then stated as VDML axioms, they could be taken as axioms which must be satisfied for a VDS4-model to correspond to a feasible QRT. Then the collection of models of such an axiomatization would yield a new logical setting for investigating feasible operational constraints on quantum systems.

Summarizing the above considerations, the specific details of a \textit{particular} QRT, where much of the richness of the theory lies, may be understood in terms of the validity of certain quantified predicate formulas and formulas involving modal operators in the modal logic setting. Then the properties of these resource theories may be viewed in a model-theoretic way, with no reference to the underlying Hilbert space structure. QRTs which deal with complex aspects of quantum thermodynamics or quantum communication protocols, for instance, may be regarded as specific models in VDS4 which satisfy a particular collection of predicated or modal axioms. In this way, the functorial relation between quantum resource theories and variable-domain modal logic is such that it allows for a new class of tools for the explorations of the operational theory.

Having established a functorial relation between \textit{small} QRTs and VDS4-models, one may naturally be inclined to ask if such a relation holds for \textit{large}, or perhaps \textit{wide} QRTs as well (noting that often QRTs are construed as being wide in $\catname{CPTP}$ anyway). Indeed, it is shown by~\cite{coeke:2016} that wide QRTs, when studied using the SMC formalism, may form rich structures (such as partitioned process theories) which do not obviously translate into small QRTs, and so such a question is not a mere curiosity but may be pertinent to other research questions. I do not answer this question here, but I shall illustrate on the one hand, what such an answer might look like, and on the other, why it is nevertheless often sufficient to consider only small QRTs.

In Section~\ref{sec:modal-logic}, I indicated that \textit{small} QRTs were the only ones whose image under $F$ are VDS4-models as such models are defined in terms of \textit{sets}. While I do not go through this in detail here, I do note that topos theory\footnote{Indeed, topos theory has been successfully used to study aspects of quantum information theory and quantum foundations already, for example in~\cite{isham:1998,isham:2000}.} provides the natural extension to ordinary modal logic in which models may have non-small features (see, for instance,~\cite{goldblatt:2006}). Thus, it is conceivable that wide QRTs may be taken to VDS4-models as constructed within a more general topos semantics. In this setting, the interesting sorts of \textit{wide} (or otherwise large) QRTs one may wish to investigate could then in principle be studied using model theory in this more general context. Topos theory, however, introduces many more complexities which are beyond the scope of this article, though these considerations offer an interesting direction for further research.

For the reader who is inclined to argue that the treatment provided here is too restricted because it only accounts for small QRTs, I remind them that, as described in Section~\ref{sec:qrt}, small QRTs are those in which the collection of Hilbert spaces $\textbf{H}$, and the collection of channels $\mathcal{O}$ are both sets (of arbitrary cardinality, in principle). Any such small QRT may be readily extended to a wide QRT by simply including all other Hilbert spaces into $\textbf{H}$, but only adding their identity channels to $\mathcal{O}$. Thus, small QRTs may be realized as special kinds of wide QRTs (those whose \textit{non-trivial} parts are small) without any loss of generality.

Furthermore, if we consider those QRTs which may \textit{actually} be implemented in the world, we see that they correspond to cases where both $\textbf{H}$ and $\mathcal{O}$ are not only sets, but \textit{finite} sets since any real-world lab may only produce a finite number of quantum systems and may only implement a finite number of operations on those systems. Thus, the idealization to allowing $\textbf{H}$ and $\mathcal{O}$ to be \textit{arbitrary} sets is a significant relaxation already from experimental feasibility. Hence, any QRT which would correspond to a specific operational procedure in a lab (or a strong generalization thereof to arbitrary cardinalities) will be small, and so small QRTs are sufficiently general for essentially any purpose.

I now remark the potential generality of the framework developed here. The programme of theory translation established here need not be strictly limited to quantum resource theories. Indeed, the general principle of studying theories of interventions in the language of modal logic could be applied to many other sorts of theories, at least in principle. Different physical theories may require different semantical considerations (e.g. variable domains may not be enough in general; one may need to introduce additional modal operators or have multi-dimensional semantics, or other such features), and may require different sorts of supplementary structure added to the semantics (in the way quantum resource theory required the addition of a domain preorder corresponding which encoded convertibility). However, the modal logic formalism seems to be sufficiently general to capture the important features of any such intervention-oriented theory such as constructor theory~\cite{deutsch:2013,deutsch:2015} or Oeckl's positive formalism~\cite{oeckl:2019}. Indeed, constructor theory ``seeks to express all fundamental scientific theories in terms of a dichotomy between possible and impossible physical transformations''~\cite[p. 4431]{deutsch:2013}, suggesting a natural modal interpretation.

Suffice to say, the central message of this paper is that the operational turn in physics may be adequately represented in a logical setting completely independent from the details of the underlying formalism of the physical theory being operationalized.

\section{Conclusions}
Here, I presented a basic introduction to both quantum resource theory as an abstract theory of quantum channels, and variable-domain modal logic as a logic of possibility established over a so-called `possible worlds' semantics. Using these constructions, I defined the categories, $\catname{QRT}$ and $\catname{VDS4}$, of isomorphism classes of quantum resource theories and models of variable-domain S4 modal logic, respectively, where the arrows in both cases were inclusions into other quantum resource theories and logical models, respectively.

With this machinery in place, a faithful functor $F:\catname{QRT}\to\catname{VDS4}$ was explicitly constructed and characterized, indicated exactly how close it is to being injective and surjective, and diagnosing precisely which features of quantum resource theory it erases, thereby establishing a strong relationship between the formalisms of these otherwise disparate subjects. By then including state convertibility data in the form of a preorder to construct the categories $\catname{QRT}^*$ and $\catname{VDS4}^*$, it was shown that $F$ could be extended to an injective functor $F^*$. The functorial relation given by $F^*$ was then compared with the symmetric monoidal category approach to resource theory and they were shown to agree.

Finally, it was shown that the modal framework allows certain common intuitions about the general structure of quantum resource theories to be expressed and proven in the modal logic setting. The possibility of applying this functorial relation to the further elaboration of particular quantum resource theories as models of certain axiomatically extended logical theories was discussed.

\section*{Acknowledgements}
The author acknowledge the support of the Natural Sciences and Engineering Research Council of Canada (NSERC), funding reference number USRA-554658-2020. The author is grateful to Martti Karvonen and for helpful comments.

\bibliographystyle{plainnat}
\bibliography{QRT-VDS4-Bib}

\end{document}